\newcommand{\be} {\begin{eqnarray*}}
\newcommand{\ee} {\end{eqnarray*}}
\theoremstyle{definition}
\newtheorem{theorem}{Theorem}[section]
\newtheorem*{theorem*}{Theorem}
\newtheorem{proposition}[theorem]{Proposition}
\def\*#1{\bm{#1}}
\title{Two-Sample Hypothesis Testing for Large Random Graphs of Unequal Size}
\author[1]{Xin Jin}
\author[1]{Kit Chan}
\author[2]{Ian Barnett}
\author[1]{Riddhi Pratim Ghosh}
\affil[1]{Department of Mathematics and Statistics, Bowling Green State University}
\affil[2]{Department of Biostatistics, University of Pennsylvania}
\date{}
\begin{document}
\maketitle
\begin{abstract}	
	Two-sample hypothesis testing for large graphs is popular in cognitive science, probabilistic machine learning and artificial intelligence. While numerous methods have been proposed in the literature to address this problem, less attention has been devoted to scenarios involving graphs of unequal size or situations where there are only one or a few samples of graphs. In this article, we propose a Frobenius test statistic tailored for small sample sizes and unequal-sized random graphs to test whether they are generated from the same model or not. Our approach involves an algorithm for generating bootstrapped adjacency matrices from estimated community-wise edge probability matrices, forming the basis of the Frobenius test statistic. We derive the asymptotic distribution of the proposed test statistic and validate its stability and efficiency in detecting minor differences in underlying models through simulations. Furthermore, we explore its application to fMRI data where we are able to distinguish brain activity patterns when subjects are exposed to sentences and pictures for two different stimuli and the control group. 	
\end{abstract}
%%%%%%%%%%%%%%%%%%%%%%%%%%%%%%%%%%%%%%%%%%%%%%%%%%%%%%
%%%%%%%%%%%%%%%%%%%%%%%%%%%%%%%%%%%%%%%%%%%%%%%%%%%%%%
\vspace{1cm}

\textit{Keywords}: Two-sample hypothesis test, Random graphs, Asymptotic normality, Bootstrap, fMRI data

\newpage
\section{Introduction} \label{intro}

Network data, consisting of a set of nodes and edges, can be broadly used to represent interactions among a group of agents or entities, and one can find its applications across almost every scientific field. Examples of such study abound in neuroscience \citep{bassett2008hierarchical}, gene regulatory networks \citep{zhang2009differential}, ecological networks \citep{montoya2006ecological}, and social interactions \citep{kossinets2006empirical,eagle2009inferring}, etc. Within network data, community structure pertains to identifying and characterizing groups or clusters of nodes that exhibit a higher degree of connectivity among themselves compared to the rest of the network \citep{newman2004finding, newman2006modularity}. For instance, in biological networks, communities could correspond to groups of interacting proteins or genes with related functions. Recent studies have made significant developments in community detection, as evident in works by \cite{bickel2016hypothesis}, \cite{ma2021asymptotic}, \cite{yanchenko2021generalized}, \cite{jha2022multiple}, and \cite{ghosh2023selecting}.

Identifying structural patterns within a network facilitates statistical inferences or testing hypotheses about the network. The latter has drawn considerable attention recently, especially in various applications. For instance, testing for similarity across brain graphs is an area of active research at the intersection of neuroscience and machine learning, and practitioners often utilize classical parametric two-sample tests, such as edgewise t-tests on correlations. Our objective in this article is to establish a concrete framework for a specific two-sample graph testing problem and to introduce a robust and computationally feasible test statistic for this purpose. In the realm of hypothesis testing in functional neuroimaging network data, \cite{ginestet2017hypothesis} emphasizes the need for a global modeling strategy to compare groups of subject-specific networks and highlights the mathematical behavior of networks without structural constraints. \cite{gao2018goodness} introduces a testing procedure based on indefinite integrals of symmetric polynomials, specifically designed for i.i.d. samples from a categorical distribution. In addition, \cite{kim2021two} presents innovative methods for comparing high-dimensional Markov networks without the need for estimating individual graphs. The widespread application of network comparison has made two-sample hypothesis testing a focal point of significant interest. The work by \cite{tang2017nonparametric} investigates testing whether two random dot product graphs share generating latent positions from the same distribution. \cite{ghoshdastidar2017two} proposes a novel approach to two-sample hypothesis testing with only one observation from each model, emphasizing the concentration of network statistics and providing a consistent and minimax optimal test, particularly applicable to comparing large random graphs like social networks on platforms such as Facebook and LinkedIn. While in \cite{ghoshdastidar2018practical}, two-sample testing of large graphs is further extended to small population set-up, assuming the same set of vertices. The challenges of testing between two groups of sparse graphs are addressed in the work by \cite{ghoshdastidar2020two}, particularly when the sample size is significantly smaller than the number of vertices. \cite{xia2022hypothesis} examines the comparison of two population means of network data, focusing on individual network links and using symmetric matrices. \cite{chen2023hypothesis} proposes a general procedure for hypothesis testing of networks and differentiating distributions of two samples of networks. \cite{tang2017semiparametric} proposed a test statistic using the function of a spectral decomposition of the adjacency matrix for testing two random dot product graphs limited to a common vertex set.

Despite the prevailing focus on a shared set of vertices in earlier studies, \cite{ghoshdastidar2017two} stands out by addressing two-sample random graphs of varying sizes. Nevertheless, the applicability of their method is limited to scenarios with only one observation from each model. Our work is motivated by the pursuit of a new procedure and test statistic that can be employed for two-sample random graphs of different sizes, free from such restrictions. In this article, we extend the discourse to address the challenges posed by unequal-sized random graphs, offering a robust methodology capable of discerning subtle distinctions between network structures.

We begin with a mild assumption that the number of communities in two random graphs is equal. If two random graphs have different numbers of communities, it is evident that they are not generated from the same underlying model. This viewpoint often lacks consideration and discussion in many two-sample hypothesis testing studies, see \cite{ghoshdastidar2017two}, \cite{ghoshdastidar2018practical}, and \cite{ghoshdastidar2020two}. We highlight it here because our innovative method effectively utilizes this assumption. To validate this assumption and estimate the number of communities, various methods can be employed. Likelihood-based approaches, as demonstrated by \cite{ma2021determining}, utilize the likelihood function or an approximate pseudo-likelihood function. These approaches are effective for selecting the best model and estimating the number of communities in both dense and sparse degree-corrected stochastic block models. Cross-validation-based methodologies, illustrated by \cite{chen2018network} and \cite{li2020network}, involve generating multiple network copies through network resampling techniques. Subsequently, cross-validation is employed to determine the optimal number of communities. On the other hand, spectral methods \citep{lei2016goodness, li2020network, jha2022multiple} leverage the spectral features of suitably adjusted adjacency matrices to estimate the number of communities. 

After estimating the number of communities, the original adjacency matrices are transformed into community-wise edge probability matrices.
Within the stochastic block model framework, bootstrapped adjacency matrices are then generated based on the estimated community-wise edge probability matrices. A Frobenius test statistic is built to quantify the squared element-wise differences between two sequences of bootstrapped adjacency matrices, effectively addressing the issue of unequal graph sizes. As the number of bootstraps increases, the proposed test statistic converges towards a normal distribution, with mean and variance characterized by factors including the sample size, the size of the bootstrapped adjacency matrices, and the number of communities. In the application to fMRI data, a sequential procedure is employed to determine the number of communities. The Frobenius test statistic adequately discerns differences in brain activity between exposure to sentences and pictures.

The rest of this article is organized as follows: Section~\ref{prelim_sec} introduces the terminologies and notations used. In Section~\ref{method_sec}, the algorithm for generating bootstrapped adjacency matrices and the definition of Frobenius test statistic are discussed, along with simulation where the proposed test statistic is then evaluated, demonstrating superior test power compared to the method in \cite{ghoshdastidar2018practical}. Section~\ref{app_sec} applies the method to fMRI data, validating the stability and efficiency of the Frobenius test statistic. Final remarks are provided in Section~\ref{dis_sec}.

%%%%%%%%%%%%%%%%%%%%%%%%%%%%%%%%%%%%%%%%%%%%%%%%%%%%%%%%%%%%%%%%%%%%%%%%%%

\section{Preliminaries} \label{prelim_sec}
In this section, we introduce terminologies and notations that are used throughout the
paper. We study two sequences of networks that each consist of $m$ random graphs, i.e., $(G_l)_{l \in \{1,\ldots,m \}}$ and $(H_l)_{l \in \{1,\ldots,m \}}$, where $m$ represents the sample size and is relatively small. The number of vertices in a graph is denoted by $n$, which determines graph size.

To be more flexible, the following notations are introduced for individual graphs $G$ and $H$. Consider $K$ communities in both graphs $G$ and $H$, with the total number of nodes being $n^G$ and $n^H$, respectively. If $n_u^G$ denotes the size of the $u \textsuperscript{th}$ community in graph $G$ and $n_v^H$ denotes the size of the $v \textsuperscript{th}$ community in graph $H$, then 
\begin{center}
	$n^G = \displaystyle \sum_{u=1}^K n_u^G$ and $n^H = \displaystyle \sum_{v=1}^K n_v^H$.
\end{center}

The Erd\H{o}s-R\'enyi model, proposed by mathematicians Paul Erd\H{o}s and Alfr\'ed R\'enyi in the late 1950s and early 1960s, is a foundational concept in the field of random graph theory. This model provides a framework for studying the properties of random graphs and has been influential in understanding the emergence of complex structures in various systems. The inhomogeneous Erd\H{o}s-R\'enyi model is a random graph model that generalizes the classic Erd\H{o}s-R\'enyi model by allowing for heterogeneity in the edge probabilities, see \cite{bollobas2007phase}. In this model, a graph with $n$ vertices is generated by specifying a probability distribution on the edge probabilities, which may depend on various factors such as the degrees of the vertices or their location in the network. Specifically, the inhomogeneous Erd\H{o}s-R\'enyi model generates a random graph $G(n, \bm P)$, where the probability of an edge between vertices $i$ and $j$ is given by $P_{ij}$, which may be a function of various vertex attributes. The model allows for a wide range of edge probabilities, leading to a more diverse network structure than the classic Erd\H{o}s-R\'enyi model, which assumes a uniform edge probability across all pairs of vertices.

We assume that the graphs are generated from the inhomogeneous Erd\H{o}s-R\'enyi (IER) model. Suppose that graphs $(G_l)_{l \in \{1,\ldots,m \}} \overset{\text{iid}}{\sim} \text{IER}(\bm P)$ and $(H_l)_{l \in \{1,\ldots,m \}} \overset{\text{iid}}{\sim} \text{IER}(\bm Q)$, where $\bm P \in [0, 1]^{n^G \times n^G}$ and $\bm Q \in [0, 1]^{n^H \times n^H}$ are edge probability matrices. Then the adjacency matrices $\bm A^G \in \{0, 1\}^{n^G \times n^G}$ and $\bm A^H \in \{0, 1\}^{n^H \times n^H}$ satisfy:
\begin{enumerate}[label=(\roman*)]
	\item $(\bm A^G)_{ij} \sim \text{Bernoulli}(P_{ij})$ for all $i, j \in \{1, \ldots, n^G \}$. \item ${(\bm A^G)_{ij}}$ are mutually independent for all $i, j \in \{1, \ldots, n^G \}$.
	\item $(\bm A^H)_{ij} \sim \text{Bernoulli}(Q_{ij})$ for all $i, j \in \{1, \ldots, n^H \}$. \item ${(\bm A^H)_{ij}}$ are mutually independent for all $i, j \in \{1, \ldots, n^H \}$.
\end{enumerate}

$\widehat{\bm P}_{\text{com}}$ and $\widehat{\bm Q}_{\text{com}}$ represent estimated community-wise edge probability matrices, where each entry denotes the probability of edge presence within and between communities. $\widetilde{\bm A}$ denotes the bootstrapped adjacency matrix generated from $\widehat{\bm P}_{\text{com}}$ and $\widehat{\bm Q}_{\text{com}}$ through sampling from the stochastic block model of networks. For example, $\widetilde{\bm A}^G_w$ corresponds to the $w\textsuperscript{th}$ bootstrapping sample generated from $\widehat{\bm P}_{\text{com}}$ for graph $G$. In this article, we consistently use $X_1, X_2, \ldots$ and $Y_1, Y_2, \ldots$ to represent random variables.

%%%%%%%%%%%%%%%%%%%%%%%%%%%%%%%%%%%%%%%%%%%%%%%%%%%%%%%%%%%%%%%%%%%%%%%%%%

\section{Methodology} \label{method_sec}
Recent studies addressing two-sample random graph testing problems have mainly focused on equal-sized graphs. However, in reality, many unresolved detection issues persist with unequal random graphs that await exploration. Our goal is to bridge this gap. In this section, we introduce an innovative approach to address this issue, utilizing both concentration-based and bootstrapping techniques. Rather than directly constructing the test statistic from adjacency matrices, we extract approximate community-wise edge probability matrices. Thereafter, we formulate a Frobenius test statistic, which is capable of detecting both directed and undirected graphs, based on these matrices.

%%%%%%%%%%%%%%%%%%%%%%%%%%%%%%%%%%%%%%%%%%%%%%%%%%%%%%%%%%%%%%%%%%%%%%%%%%%%%%%%%%%%%%%%

\subsection{Bootstrapping adjacency matrices}
When the sample size $m$ is small, it is of interest to test whether two sequences of random graphs are from the same population based on asymptotics for undirected and unweighted large graphs, namely, test whether $(G_l)_{l \in \{1,\ldots,m\}}$ and $(H_l)_{l \in \{1,\ldots,m\}}$ are generated from the same random model.

Most recent studies \citep{ghoshdastidar2018practical, ghoshdastidar2020two} have focused on testing two graphs on a common vertex set of size $n$. For any $n$, suppose that $(G_l)_{l \in \{1,\ldots,m\}} \overset{\text{iid}}{\sim} \text{IER}(\bm P)$ and $(H_l)_{l \in \{1,\ldots,m\}} \overset{\text{iid}}{\sim} \text{IER}(\bm Q)$, where $\bm P$ and $\bm Q \in [0, 1]^{n \times n}$ are edge probability matrices, test the hypothesis:
\begin{center}
	$H_0: \bm P = \bm Q$ vs. $H_a: \bm P \neq \bm Q$.
\end{center}
The use of graphs that possess equal size for testing purposes, while beneficial, may present constraints that limit their overall utility. Hence, we have been driven to investigate alternative testing approaches that do not rely on the equal-size assumption, thereby expanding the scope of our testing methods.

Community detection has been broadly studied in network analysis, if two random graphs have different numbers of communities, then there is no need for the test described above. Naturally, we assume that two sequences of random graphs $(G_l)_{l \in \{1,\ldots,m\}}$ and $(H_l)_{l \in \{1,\ldots,m\}}$ have the same number of communities. Let $C_u$ be the $u \textsuperscript{th}$ community of $K$-community graphs, $u \in \{1, \ldots, K\}$. Then community-wise edge probability matrices $\bm P_{\text{com}}$ and $\bm Q_{\text{com}} \in [0, 1]^{K \times K}$ are defined as follows,
\begin{align} \label{P&Q}
	\bm P_{\text{com}} = \begin{pmatrix}
		p_{11} & q_{12} & \cdots & q_{1K} \\
		q_{21} & p_{22} & \cdots & q_{2K} \\
		\vdots  & \vdots  & \ddots & \vdots  \\
		q_{K1} & q_{K2} & \cdots & p_{KK}
	\end{pmatrix},
	& \text{ and }
	\bm Q_{\text{com}} = \begin{pmatrix}
		p_{11}^* & q_{12}^* & \cdots & q_{1K}^* \\
		q_{21}^* & p_{22}^* & \cdots & q_{2K}^* \\
		\vdots  & \vdots  & \ddots & \vdots  \\
		q_{K1}^* & q_{K2}^* & \cdots & p_{KK}^*
	\end{pmatrix},
\end{align}
where $p_{uu}$ is $C_u$ within-community probability and $q_{uv}$ is the probability between $C_u$ and $C_v$ of graph $G$, $u, v \in \{1, \ldots, K\}$, similarly for entries in $\bm Q_{\text{com}}$ of graph $H$. We test the hypothesis
\begin{center}
	$H_0: \bm P_{\text{com}} = \bm Q_{\text{com}}$ vs. $H_a: \bm P_{\text{com}} \neq \bm Q_{\text{com}}$.
\end{center}

It's worth noting that the hypothesis above is the most similar to that presented in \cite{ghoshdastidar2018practical}, while it specifically addresses graphs of unequal sizes within the framework of IER models. To assess the aforementioned hypothesis, we initiate the process by generating two sets of random graph sequences derived from stochastic block models. These models incorporate true community-wise edge probability matrices denoted as $\bm P_{\text{com}}$ and $\bm Q_{\text{com}}$. For the sake of simplicity, we will consider an individual graph $G$, and its adjacency matrix $\bm A^G$ can be partitioned into $K^2$ blocks along with the $K$ communities. Subsequently, the matrices $\bm A^G$ and the estimated community-wise edge probability matrix $\widehat{\bm P}_{\text{com}} \in [0, 1]^{K \times K}$ can be constructed in the following manner:
\begin{center}
	$\bm A^G = \begin{pmatrix}
		\bm A_{11} & \bm A_{12} & \cdots & \bm A_{1K} \\
		\bm A_{21} & \bm A_{22} & \cdots & \bm A_{2K} \\
		\vdots  & \vdots  & \ddots & \vdots  \\
		\bm A_{K1} & \bm A_{K2} & \cdots & \bm A_{KK}
	\end{pmatrix}$,
	$\widehat{\bm P}_{\text{com}} = \begin{pmatrix}
		\hat{p}_{11} & \hat{q}_{12} & \cdots & \hat{q}_{1K} \\
		\hat{q}_{21} & \hat{p}_{22} & \cdots & \hat{q}_{2K} \\
		\vdots  & \vdots  & \ddots & \vdots  \\
		\hat{q}_{K1} & \hat{q}_{K2} & \cdots & \hat{p}_{KK}
	\end{pmatrix}$,
\end{center}
where for all $u, v \in \{1, \ldots, K\}$,
\begin{center}
	$\displaystyle \hat{p}_{uu} = \frac{\# \text{ of } 1\text{'s in } \bm A_{uu}}{2{n_u^G \choose 2}}$, and $\displaystyle \hat{q}_{uv} = \frac{\# \text{ of } 1\text{'s in } \bm A_{uv}}{{n_u^G \choose 1}{n_v^G \choose 1}}$ for $u \neq v$. 
\end{center}

Similarly, the estimated community-wise edge probability matrix $\widehat{\bm Q}_{\text{com}} \in [0, 1]^{K \times K}$ of graph $H$ is denoted by 
\begin{center}
	$\widehat{\bm Q}_{\text{com}} = \begin{pmatrix}
		\hat{p}_{11}^* & \hat{q}_{12}^* & \cdots & \hat{q}_{1K}^* \\
		\hat{q}_{21}^* & \hat{p}_{22}^* & \cdots & \hat{q}_{2K}^* \\
		\vdots  & \vdots  & \ddots & \vdots  \\
		\hat{q}_{K1}^* & \hat{q}_{K2}^* & \cdots & \hat{p}_{KK}^*
	\end{pmatrix}$.
\end{center}
The unbiased property of estimates $\hat{\bm p}, \hat{\bm p}^*, \hat{\bm q}$, $\hat{\bm q}^*$ is provided in \Cref{A.4.1}.

The size of the bootstrapped adjacency matrices, denoted as $n$, is selected as $n = \max \left(iK: i \in \mathbb{Z}, n \leq \min(n^G, n^H) \right)$. The bootstrapped adjacency matrices $\widetilde{\bm A}^G_w \in \{0, 1\}^{n \times n}$ for $w \in \{1, \ldots, d\}$, can be generated from $\widehat{\bm P}_{\text{com}}$ through the stochastic block model with $d$ goes to infinity. Similarly, the adjacency matrices $\widetilde{\bm A}^H_w \in \{0, 1\}^{n \times n}$ for $w \in \{1, \ldots, d\}$ can be generated from $\widehat{\bm Q}_{\text{com}}$. In this bootstrapping process, the vertex set is partitioned into $K$ communities in both a balanced and imbalanced manner. The number of nodes in the $u\textsuperscript{th}$ community is denoted by $n_u$, which is calculated as $n \exp{(w_u)}/\sum_{u=1}^K \exp{(w_u)}$, where $w_u \sim \mathcal{N}(0, \tau^2)$, that controls the structure of the graph. For example, $\tau = 0$, and $\tau > 0$ etc.

We propose the following procedure to generate bootstrapped adjacency matrices. 
\begin{table}[H]
	\captionof{table}{\bf{Algorithm for generating bootstrapped adjacency matrices}} \label{tab1}
	\centering 	
	\setlength{\tabcolsep}{0.9em}
	\renewcommand{\arraystretch}{1.1}
	
	\smallskip
	\small
	\begin{tabular}{l}
		\hline
		\hline
		\rule{0pt}{0.6cm}
		\textbf{Input:} The original adjacency matrices $(\bm A^{G_l})_{l \in \{1,\ldots,m\}} \in \{0, 1\}^{n^G \times n^G}$ and $(\bm A^{H_l})_{l \in \{1,\ldots,m\}}$ \\
		\hspace{1.5cm} $\in \{0, 1\}^{n^H \times n^H}$.
		\vspace{5pt}
		\\ \hline
		\rule{0pt}{0.6cm}
		1. Estimate community-wise edge probability matrices $(\widehat{\bm P}^{G_l}_{\text{com}})_{l \in \{1, \ldots, m\}}$ and $(\widehat{\bm Q}^{H_l}_{\text{com}})_{l \in \{1, \ldots, m\}}$ \\
		\hspace{0.5cm} $\in [0, 1]^{K \times K}$. \\ 
		\vspace{5pt}
		2. Choose $n = \max \left(iK: i \in \mathbb{Z}, n \leq \min(n^G, n^H) \right)$. \\
		\vspace{5pt}
		3. Determine the block size: For each community $u \in \{1, \ldots, K\}$, $n_u = n \exp{(w_u)}/\sum_{u=1}^K \exp{(w_u)}$, \\  
		\vspace{5pt}
		\hspace{0.4cm} where $w_u \sim \mathcal{N}(0, \tau^2)$. \\
		\hspace{0.4cm} (i) Balanced block sizes:  $\tau = 0$. \\
		\hspace{0.4cm} (ii) Imbalanced block sizes: $\tau > 0$. \\
		\vspace{5pt}
		4. Generate bootstrapped adjacency matrices $\widetilde{\bm A}^{G_l}_w \text{ and } \widetilde{\bm A}^{H_l}_w \in \{0, 1\}^{n \times n}$ from $(\widehat{\bm P}^{G_l}_{\text{com}})_{l \in \{1, \ldots, m\}}$ \\ 
		\hspace{0.4cm} and $(\widehat{\bm Q}^{H_l}_{\text{com}})_{l \in \{1, \ldots, m\}}$ through the stochastic block model for $w \in \{1, \ldots, d\}$.
		\vspace{5pt}
		\\ \hline
		\rule{0pt}{0.6cm}
		\textbf{Output:} The bootstrapped adjacency matrices $(\widetilde{\bm A}^{G_l}_w)_{l \in \{1, \ldots, m\}} \text{ and } (\widetilde{\bm A}^{H_l}_w)_{l \in \{1, \ldots, m\}}$ for $w$ \\
		\vspace{5pt}
		\hspace{1.7cm} $\in \{1, \ldots, d\}$. \\ 
		\hline
		\hline
	\end{tabular}
\end{table}

%%%%%%%%%%%%%%%%%%%%%%%%%%%%%%%%%%%%%%%%%%%%%%%%%%%%%%%%%%%%%%%%%%%%%%%%%%

\subsection{Frobenius Statistic} 
As introduced at the beginning of \Cref{prelim_sec}, our goal is to assess if the underlying probabilistic models of two random graphs are the same or not. To achieve this, establishing the correspondence between communities in matrices $\bm{P}_{\text{com}}$ and $\bm{Q}_{\text{com}}$ is crucial. This involves examining all possible permutations $\pi$ of communities in one matrix, e.g., $\bm{Q}_{\text{com}}$. This exploration allows us to define a metric, denoted as $T$, measuring dissimilarity between bootstrapped adjacency matrices $(\widetilde{\bm{A}}^{G_l}_w)_{l \in \{1, \ldots, m\}}$ and $(\widetilde{\bm{A}}^{H_l}_w)_{l \in \{1, \ldots, m\}}$ as outlined below.

Given two sequences of bootstrapped adjacency matrices $(\widetilde{\bm A}^{G_l}_w)_{l \in \{1, \ldots, m\}}$ and $(\widetilde{\bm A}^{H_l}_w)_{l \in \{1, \ldots, m\}}$, where $w \in \{1, \ldots, d\}$, consider the test statistic based on estimates of the Frobenius norm 
\begin{center}
	$T = \displaystyle \inf_{\pi}
	\left( \frac{\displaystyle 
		\sum_{w=1}^d
		\sum_{l=1}^m 
		\sum_{i, j=1}^n
		\left((\widetilde{\bm A}^{G_l}_w)_{ij} - (\widetilde{\bm A}^{H_l}_w)_{\pi (\widehat{\bm Q}_{\text{com}})(ij)} \right)^2}{mdn(n-1)}
	\right)$,
\end{center}
where the infimum is taken over all possible permutations $\pi$ of $K$ communities in graph $H$, and $(\widetilde{\bm A}^{G_l}_w)_{ij}$ is the $(i, j)\textsuperscript{th}$ entry of $\widetilde{\bm A}^{G_l}_w$ and $(\widetilde{\bm A}^{H_l}_w)_{\pi (\widehat{\bm Q}_{\text{com}}) (ij)}$ is the $(i, j) \textsuperscript{th}$ entry of $\widetilde{\bm A}^{H_l}_w$ after applying $\pi$ to the matrix $\widehat{\bm Q}_{\text{com}}$.

Under the null hypothesis, for any fixed $i, j$ with $1\leq i, j  \leq n$, the random variables $(\widetilde{\bm A}_w^{G_l})_{ij}$ and 
$(\widetilde{\bm A}_w^{H_l})_{\pi (\widehat{\bm Q}_{\text{com}})(ij)}$, where $w \in \{1, \ldots, d\}$ and $l \in \{1, \dots, m\}$, are i.i.d. with the same mean $\mu_{ij}$ and variance $\sigma^2_{ij}$. Next, to analyze the asymptotic distribution of the Frobenius statistic denoted as $T$, we establish the following analogue of the Lindeberg condition for the difference $\widetilde{\bm A}_w^{G_l} - (\widetilde{\bm A}_w^{H_l})_{\pi (\widehat{\bm Q}_{\text{com}})}$ from two comparable sets of graphs $G_l$ and $H_l$.

\begin{proposition} \label{prop3.1}
	Under the null hypothesis, there exists a permutation $\pi$ of the matrix $\widehat{\bm Q}_{\text{com}}$ such that for any given $\delta >0$, and integers $n, m \geq 1$, we have the following limit as $d \rightarrow \infty$.
	\begin{align*}
		\begin{split}
			\left( \frac{1}{mds_1^2} \right) \sum_{w=1}^d
			\sum_{l=1}^m 
			\sum_{i, j=1}^n
			E \left[ \left((\widetilde{\bm A}_w^{G_l})_{ij} - (\widetilde{\bm A}_w^{H_l})_{\pi (\widehat{\bm Q}_{\text{com}}) (ij)} \right)^2 \,1_{\left\{\left|(\widetilde{\bm A}_w^{G_l})_{ij} - (\widetilde{\bm A}_w^{H_l})_{\pi (\widehat{\bm Q}_{\text{com}}) (ij)}\right|\ > \ \delta mds_1 \right\}} \right] \rightarrow 0,
		\end{split}
	\end{align*} 
	where $s_1^2 = 2 \displaystyle        \sum_{i, j=1}^n \sigma_{ij}^2$.
\end{proposition}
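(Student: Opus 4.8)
The plan is to exploit the fact that each entry of a bootstrapped adjacency matrix is Bernoulli-distributed, hence $\{0,1\}$-valued, so every difference appearing in the sum is uniformly bounded by $1$; the Lindeberg truncation level $\delta m d s_1$ then outgrows this bound as $d\to\infty$, forcing each indicator to vanish. Consequently the entire expression is identically zero for all large $d$, and the stated limit is trivial.

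First I would fix the permutation. Under the null $\bm P_{\text{com}}=\bm Q_{\text{com}}$, so there is a relabeling $\pi$ of the $K$ communities of $H$ (at worst the identity) for which $\pi(\widehat{\bm Q}_{\text{com}})$ is aligned community-by-community with $\widehat{\bm P}_{\text{com}}$. With this $\pi$ fixed, the discussion preceding the proposition gives that for each fixed $(i,j)$ the variables $(\widetilde{\bm A}_w^{G_l})_{ij}$ and $(\widetilde{\bm A}_w^{H_l})_{\pi(\widehat{\bm Q}_{\text{com}})(ij)}$ are i.i.d.\ across $w$ and $l$ with common mean $\mu_{ij}$ and variance $\sigma_{ij}^2$, so that $s_1^2=2\sum_{i,j=1}^n\sigma_{ij}^2$ is well defined and, under a non-degeneracy assumption, strictly positive.

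Next, writing $D_{w,l,ij}:=(\widetilde{\bm A}_w^{G_l})_{ij}-(\widetilde{\bm A}_w^{H_l})_{\pi(\widehat{\bm Q}_{\text{com}})(ij)}$, I would record the key structural fact that $D_{w,l,ij}\in\{-1,0,1\}$, so $|D_{w,l,ij}|\le 1$ almost surely for every $w,l,i,j$. Since $s_1$ is a positive constant that does not depend on $d$, and $\delta,m,n$ are held fixed, the threshold $\delta m d s_1\to\infty$ as $d\to\infty$. Choosing $d_0$ with $\delta m d_0 s_1>1$, for every $d\ge d_0$ we have $\delta m d s_1>1\ge|D_{w,l,ij}|$, so every indicator $1_{\{|D_{w,l,ij}|>\delta m d s_1\}}$ equals $0$. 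Hence each summand is zero, the entire triangular-array sum vanishes for all $d\ge d_0$, and the limit as $d\to\infty$ is $0$.

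I do not expect a genuine obstacle here: once boundedness is observed, the truncation argument is automatic, since bounded summands trivially satisfy the Lindeberg condition whenever the truncation level diverges. The only point demanding care is the non-degeneracy condition $s_1>0$ (equivalently, that not all Bernoulli parameters $\mu_{ij}$ lie in $\{0,1\}$), which is needed both for the normalization $1/(mds_1^2)$ to be meaningful and to guarantee that the threshold actually grows; I would state this as a standing assumption on the edge-probability matrices.
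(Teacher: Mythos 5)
Your proof is correct, but it takes a genuinely different route from the paper's proof of this proposition. The paper proceeds by centering: it bounds $\bigl((\widetilde{\bm A}_w^{G_l})_{ij}-(\widetilde{\bm A}_w^{H_l})_{ij}\bigr)^2 \leq 2\bigl((\widetilde{\bm A}_w^{G_l})_{ij}-\mu_{ij}\bigr)^2 + 2\bigl((\widetilde{\bm A}_w^{H_l})_{ij}-\mu_{ij}\bigr)^2$, splits the truncation event via the triangle inequality into the union of $\bigl\{\bigl|(\widetilde{\bm A}_w^{G_l})_{ij}-\mu_{ij}\bigr|>\tfrac{\delta}{2}mds_1\bigr\}$ and $\bigl\{\bigl|(\widetilde{\bm A}_w^{H_l})_{ij}-\mu_{ij}\bigr|>\tfrac{\delta}{2}mds_1\bigr\}$, and thereby reduces the two-sample Lindeberg sum to one-sample Lindeberg sums for the i.i.d.\ centered entries, which are then asserted to vanish. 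You instead observe that the difference is $\{-1,0,1\}$-valued while the threshold $\delta m d s_1$ diverges (since $n$, $m$, $\delta$ are fixed and $s_1$ does not depend on $d$), so every indicator is identically zero once $\delta m d s_1>1$, making the whole expression vanish for all large $d$. Your argument is shorter and, in fact, is essentially the same truncation argument the paper itself uses to prove Proposition~\ref{prop3.3} under the alternative hypothesis, so it is clearly within the spirit of the paper; applying it here simply reveals that the null-hypothesis Lindeberg condition is trivial for bounded entries. What the paper's decomposition buys is a template that would survive if the entries were unbounded (e.g., weighted graphs), where one cannot kill the indicators outright and must instead pass to the one-sample Lindeberg conditions. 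One point you handle more carefully than the paper: you make explicit the non-degeneracy requirement $s_1>0$ (i.e., not all estimated edge probabilities equal $0$ or $1$), which both proofs need for the normalization and for the threshold to diverge, but which the paper leaves implicit.
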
 

The proof of \Cref{prop3.1} is deferred to \Cref{appendix_a1}. According to \Cref{prop3.1}, the Lindeberg condition is satisfied. As $d \rightarrow \infty$, the central limit theorem implies that $T$ converges to a normal distribution, referred to \cite{Lindeberg1922}. 

We are now introducing the asymptotic behavior of $T$ as $d$ approaches infinity. The parameters of the asymptotic normal distribution under $H_0$ are further explored in both balanced and imbalanced block size scenarios. For all $w \in \{1, \ldots, d\}$, 
\begin{enumerate}[label=(\roman*)]
	\item if $i, j \in C_u$, then $\left((\widetilde{\bm A}_w^{G})_{ij} - (\widetilde{\bm A}_w^{H})_{\pi(\widehat{\bm Q}_{\text{com}})(ij)}\right)^2 \sim \text{Bernoulli}(\hat{p}_{uu} + \hat{p}^*_{uu} - 2\hat{p}_{uu} \hat{p}^*_{uu})$.
	
	\item if $i \in C_u$ and $j \in C_v$ with $u \neq v$, then $\left((\widetilde{\bm A}_w^{G})_{ij} - (\widetilde{\bm A}_w^{H})_{\pi(\widehat{\bm Q}_{\text{com}})(ij)}\right)^2 \sim \text{Bernoulli}(\hat{q}_{uv} + \hat{q}^*_{uv} - 2\hat{q}_{uv}\hat{q}^*_{uv})$. 
\end{enumerate}
Denote $\tilde{p}_{uu} = \hat{p}_{uu} + \hat{p}^*_{uu} - 2\hat{p}_{uu}\hat{p}^*_{uu}$ and $\tilde{q}_{uv} = \hat{q}_{uv} + \hat{q}^*_{uv} - 2\hat{q}_{uv}\hat{q}^*_{uv}$. 

\begin{theorem}\label{theorem3.2}
	Under the null hypothesis, the Frobenius statistic $T \sim \mathcal{N}(\mu, \sigma^2)$ as $d \rightarrow \infty$. The specific expressions for $\mu$ and $\sigma^2$ are presented in two cases as follows.
	\newline
	$A1.$ For the balanced block size, 
	\begin{center}
		$\mu = X_1 + X_2$ and $\sigma^2 = X_3 + X_4$, 
	\end{center}
	where 
	\begin{center}
		$\displaystyle X_1 = \frac{n-K}{m(n-1)K^2} \sum_{l=1}^m \sum_{u=1}^{K} \tilde{p}^l_{uu}$, 
		$\displaystyle X_2 = \frac{n}{m(n-1)K^2} \sum_{l=1}^m \sum_{u=1}^{K} \sum_{v \neq u} \tilde{q}^l_{uv}$, \\
		$\displaystyle X_3 = \frac{n-K}{m^2dn(n-1)^2K^2} \sum_{l=1}^m \sum_{u=1}^{K}\tilde{p}^l_{uu}(1 - \tilde{p}^l_{uu})$, and
		$\displaystyle X_4 = \frac{n}{m^2dn(n-1)^2K^2} \sum_{l=1}^m \sum_{u=1}^{K} \sum_{v \neq u} \tilde{q}^l_{uv}(1 - \tilde{q}^l_{uv})$. 
	\end{center} 
	\leavevmode
	\newline
	$A2.$ For the imbalanced block size, 
	\begin{center}
		$\mu = Y_1 + Y_2$ and $\sigma^2 = Y_3 + Y_4$, 
	\end{center}
	where 
	\begin{center}
		$Y_1 = \frac{\displaystyle \sum_{l=1}^m \sum_{u=1}^K \tilde{p}^l_{uu} \left[n \exp^2(w_u) -  \exp(w_u)\sum_{u=1}^K \exp(w_u) \right]}{\displaystyle m(n-1) \left(\sum_{u=1}^K \exp(w_u)\right)^2}$, 
		$Y_2 = \frac{\displaystyle n \sum_{l=1}^m \sum_{u=1}^K \sum_{v \neq u} \tilde{q}^l_{uv} \left[\exp(w_u) \exp(w_v) \right]}{\displaystyle m(n - 1) \left(\sum_{u=1}^K \exp(w_u)\right)^2}$,\\
		$Y_3 = \frac{\displaystyle \sum_{l=1}^m \sum_{u=1}^K \tilde{p}^l_{uu} (1 - \tilde{p}^l_{uu}) \left[n \exp^2(w_u) -  \exp(w_u)\sum_{u=1}^K \exp(w_u) \right]}{\displaystyle m^2dn(n - 1)^2 \left(\sum_{u=1}^K \exp(w_u)\right)^2}$, and\\
		$Y_4 = \frac{\displaystyle \sum_{l=1}^m \sum_{u=1}^K \sum_{v \neq u} \tilde{q}^l_{uv} (1 - \tilde{q}^l_{uv}) \left[\exp(w_u) \exp(w_v) \right]}{\displaystyle m^2d(n - 1)^2 \left(\sum_{u=1}^K \exp(w_u)\right)^2}$.
	\end{center}
\end{theorem}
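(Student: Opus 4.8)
The plan is to show that, \emph{conditional on} the estimated community-wise probability matrices, the numerator of $T$ is a normalized sum of mutually independent Bernoulli random variables, to compute its exact mean and variance, and then to invoke the Lindeberg central limit theorem justified by \Cref{prop3.1} to obtain the limiting normal shape. First I would fix the permutation $\pi$ supplied by \Cref{prop3.1}, namely the one aligning $\widehat{\bm Q}_{\text{com}}$ with $\widehat{\bm P}_{\text{com}}$ under $H_0$, at which the infimum defining $T$ is attained, and I would condition on all estimated probabilities $\hat p^l_{uu},\hat p^{*l}_{uu},\hat q^l_{uv},\hat q^{*l}_{uv}$. Conditioning is what lets me treat $\tilde p^l_{uu}$ and $\tilde q^l_{uv}$ as deterministic constants, so that the only source of randomness left is the bootstrap resampling indexed by $w$.

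The crucial input is the distributional fact stated just before the theorem: each squared difference $\big((\widetilde{\bm A}^{G_l}_w)_{ij}-(\widetilde{\bm A}^{H_l}_w)_{\pi(\widehat{\bm Q}_{\text{com}})(ij)}\big)^2$ is exactly the indicator that the two independent Bernoulli draws disagree, hence is itself $\text{Bernoulli}(\tilde p^l_{uu})$ when $i,j\in C_u$ and $\text{Bernoulli}(\tilde q^l_{uv})$ when $i\in C_u,\,j\in C_v$ with $u\neq v$. Because the entries of each stochastic block model draw are mutually independent and the bootstraps are drawn independently across $w$, $l$, and across the two samples, the triple sum is a sum of independent Bernoulli variables. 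Its mean is therefore the sum of the parameters and, since for a $\text{Bernoulli}(p)$ variable the square has variance $p(1-p)$, its variance is the sum of the corresponding $p(1-p)$ terms.

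Next I would carry out the index bookkeeping. Writing $n_u$ for the block size, there are $n_u(n_u-1)$ within-community ordered pairs $(i,j)$ in $C_u$ and $n_u n_v$ between-community pairs for $u\neq v$; one checks these total $n(n-1)$, so the diagonal $i=j$ is excluded and the normalization is exact. Since the parameters do not depend on $w$, summing over the $d$ bootstraps contributes a factor $d$ that cancels one power in the mean and leaves the $1/d$ scaling in the variance, giving
\begin{align*}
E[T] &= \frac{1}{m n (n-1)} \sum_{l=1}^m \left[ \sum_{u=1}^K n_u(n_u-1)\,\tilde p^l_{uu} + \sum_{u=1}^K \sum_{v \ne u} n_u n_v\, \tilde q^l_{uv} \right], \\
\Var(T) &= \frac{1}{m^2 d\, n^2 (n-1)^2} \sum_{l=1}^m \left[ \sum_{u=1}^K n_u(n_u-1)\,\tilde p^l_{uu}(1-\tilde p^l_{uu}) + \sum_{u=1}^K \sum_{v \ne u} n_u n_v\, \tilde q^l_{uv}(1-\tilde q^l_{uv}) \right].
\end{align*}

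Finally I would specialize the block sizes. For $A1$ the balanced choice $n_u=n/K$ gives $n_u(n_u-1)=n(n-K)/K^2$ and $n_u n_v=n^2/K^2$, which collapse the two displays into $X_1+X_2$ and $X_3+X_4$. For $A2$ the imbalanced choice $n_u=n\exp(w_u)/\sum_{u}\exp(w_u)$ gives $n_u(n_u-1)=n\big[n\exp^2(w_u)-\exp(w_u)\sum_u\exp(w_u)\big]/(\sum_u\exp(w_u))^2$ and $n_u n_v=n^2\exp(w_u)\exp(w_v)/(\sum_u\exp(w_u))^2$, which after cancellation yield $Y_1+Y_2$ and $Y_3+Y_4$; note that the extra factor $n$ from the within-community count survives in $X_3$ and $Y_3$ but is absorbed in the between-community terms. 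Since \Cref{prop3.1} verifies the Lindeberg condition for this permutation, the Lindeberg--Feller CLT gives $(T-\mu)/\sigma\to\mathcal{N}(0,1)$ as $d\to\infty$, so $T$ is asymptotically $\mathcal{N}(\mu,\sigma^2)$ with the stated parameters. I expect the only genuine obstacles to be the careful pair-counting (ensuring $i=j$ is dropped so that $n(n-1)$ is exact) and the algebraic cancellation in the imbalanced case, together with the conceptual point that all randomness must be funneled through the bootstrap index $w$ by conditioning on the estimates, so that the independence and the constancy of $\tilde p^l_{uu},\tilde q^l_{uv}$ are legitimate.
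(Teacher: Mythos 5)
Your proposal is correct and takes essentially the same route as the paper: the paper's (largely implicit) justification of Theorem 3.2 consists exactly of the Bernoulli-disagreement facts (i)--(ii) stated just before the theorem, the block-size bookkeeping with $n_u = n/K$ (balanced) or $n_u = n\exp(w_u)/\sum_{u=1}^K \exp(w_u)$ (imbalanced), and the Lindeberg CLT licensed by Proposition 3.1. Your write-up simply makes explicit the ordered-pair counting ($\sum_u n_u(n_u-1) + \sum_{u\neq v} n_u n_v = n(n-1)$) and the algebraic cancellations that the paper leaves to the reader, and your resulting $\mu$ and $\sigma^2$ reproduce $X_1+X_2$, $X_3+X_4$, $Y_1+Y_2$, and $Y_3+Y_4$ exactly.
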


For the following proposition, we assume $\gamma$ is positive because $n$ is a fixed and relatively small number comparing to $d$. So if we take the condition $s_2^2 \geq cd^{-2 \gamma}$, which means there exists a positive constant $c$ such that $n(n-1)/2 \geq cd^{-2 \gamma}$, or $d^{2 \gamma} \geq 2c/n(n-1)$. Hence, we can assume $0 < \gamma < 1$. In Proposition \ref{prop3.3}, conditions $B1$ and $B2$ establish a tight lower bound on the variance in bootstrapped adjacency matrices, dependent on network size $n$ and community size $K$. This ensures that the network density does not become excessively sparse or dense.

\begin{proposition} \label{prop3.3}
	Under the alternative hypothesis, if there exists a value $\gamma$ satisfying $0 < \gamma < 1$ such that
	\newline
	$B1.$ for balanced block sizes, 
	\begin{align} \label{cond_1}
		\begin{split}
			\displaystyle \frac{K^3}{n(n-K)} + \frac{K^4 - K^3}{n^2} = \Omega(d^{- 2\gamma}).   
		\end{split}
	\end{align} 
	\newline
	$B2.$ for imbalanced block sizes,
	\begin{align} \label{cond_2}
		\begin{split}
			\displaystyle \sum_{u=1}^K \frac{\left(\sum_{u=1}^K \exp{(w_u)} \right)^2}{n^2 \left(\exp{(w_u)} \right)^2 - n\exp{(w_u)} \displaystyle \sum_{u=1}^K \exp{(w_u)}} + 
			\displaystyle \sum_{u=1}^K \sum_{v \neq u} \frac{\left(\sum_{u=1}^K \exp{(w_u)} \right)^2}{n^2 \exp{(w_u)} \exp{(w_v)}} = \Omega(d^{- 2\gamma}).
		\end{split}
	\end{align} 
	Then there exists a permutation $\pi$ of the matrix $\widehat{\bm Q}_{\text{com}}$ such that for any given $\delta > 0$, and integers $n, m \geq 1$, we have the following limit as $d \rightarrow \infty$.
	\begin{align*}
		\begin{split}
			\left(\frac{1}{mds_2^2 } \right) \sum_{w=1}^d
			\sum_{l=1}^m 
			\sum_{i, j=1}^n
			E \left[ \left((\widetilde{\bm A}_w^{G_l})_{ij} - (\widetilde{\bm A}_w^{H_l})_{\pi (\widehat{\bm Q}_{\text{com}})(ij)} - \mu \right)^2 \,1_{\left \{\left|(\widetilde{\bm A}_w^{G_l})_{ij} - (\widetilde{\bm A}_w^{H_l})_{\pi (\widehat{\bm Q}_{\text{com}})(ij)} - \mu \right|\ > \ \delta mds_2 \right \}} \right] \rightarrow 0,
		\end{split}
	\end{align*} 
	where $\mu = \mu_{ij}^G - \mu_{ij}^H$ and $s_2^2 = \displaystyle \sum_{i, j=1}^n\, \left({\sigma_{ij}^2}^G + {\sigma_{ij}^2}^H \right)$.
\end{proposition}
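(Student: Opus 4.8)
The plan is to reduce this Lindeberg-type condition to an elementary boundedness argument: the centered summands are uniformly bounded, so every truncation set becomes empty once the threshold $\delta m d s_2$ outgrows that bound, and conditions $B1$ and $B2$ are precisely what force the threshold to diverge.

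First I would record the structural fact that drives the whole argument. For each fixed $(w,l,i,j)$, the entries $(\widetilde{\bm A}_w^{G_l})_{ij}$ and $(\widetilde{\bm A}_w^{H_l})_{\pi(\widehat{\bm Q}_{\mathrm{com}})(ij)}$ are independent Bernoulli variables, so their difference takes values in $\{-1,0,1\}$; since the centering $\mu=\mu_{ij}^G-\mu_{ij}^H$ lies in $[-1,1]$, the centered difference satisfies
\[
\left|(\widetilde{\bm A}_w^{G_l})_{ij}-(\widetilde{\bm A}_w^{H_l})_{\pi(\widehat{\bm Q}_{\mathrm{com}})(ij)}-\mu\right|\le 2
\]
almost surely, uniformly over all indices. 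Hence the truncation event $\{\,|\,\cdot\,-\mu|>\delta m d s_2\,\}$ is empty as soon as $\delta m d s_2>2$, in which case every expectation in the displayed sum is $0$ and the entire quantity vanishes identically. It therefore suffices to prove that $\delta m d s_2\to\infty$, equivalently $d\,s_2\to\infty$, as $d\to\infty$ for fixed $\delta,n,m$.

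Next I would extract that divergence from the hypotheses. Splitting $s_2^2=\sum_{i,j}(\sigma_{ij}^{2\,G}+\sigma_{ij}^{2\,H})$ into within- and between-community blocks, block $(u,v)$ contributes $n_u(n_u-1)$ ordered pairs on the diagonal and $n_un_v$ off it, weighted by the Bernoulli variances $\hat p_{uu}(1-\hat p_{uu})$, $\hat q_{uv}(1-\hat q_{uv})$ and their starred analogues. A direct calculation identifies the left-hand sides of \eqref{cond_1} and \eqref{cond_2} with the sums of reciprocal block pair-counts $\sum_u \tfrac{1}{n_u(n_u-1)}+\sum_{u\neq v}\tfrac{1}{n_un_v}$ in the balanced ($n_u=n/K$) and imbalanced ($n_u=n\exp(w_u)/\sum_u\exp(w_u)$) partitions respectively. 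Together with the requirement that the block densities stay bounded away from $0$ and $1$ (so the variance factors are bounded below), these conditions secure the variance lower bound $s_2^2=\Omega(d^{-2\gamma})$. Because $0<\gamma<1$, this gives $d\,s_2\ge\sqrt{c}\,d^{\,1-\gamma}\to\infty$, so $\delta m d s_2>2$ for all large $d$ and the limit is $0$.

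Finally, the permutation is handled for free: both steps rely only on boundedness and on the divergence of the threshold, neither of which depends on $\pi$, so any fixed permutation works — in particular the infimizer in the definition of $T$ — establishing the asserted existence. The main obstacle is the bookkeeping of the third paragraph, namely carefully matching the block-count/variance combinations in \eqref{cond_1} and \eqref{cond_2} to $s_2^2$ and verifying that the $\Omega(d^{-2\gamma})$ lower bounds persist under the density constraints, separately in the balanced and imbalanced regimes; the remaining boundedness argument is routine and parallels the proof of \Cref{prop3.1}.
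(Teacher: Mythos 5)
Your proposal is correct and follows essentially the same route as the paper's own proof: both arguments use the conditions $B1$/$B2$ to obtain the lower bound $s_2^2 \geq c d^{-2\gamma}$, deduce that the truncation threshold $\delta m d s_2 \geq \sqrt{c}\,\delta m d^{1-\gamma} \to \infty$, and then observe that the centered differences are bounded by $2$ in absolute value, so the truncation sets are empty for all large $d$ and every term vanishes. Your treatment is in fact slightly more explicit than the paper's in two places — spelling out the reciprocal pair-count bookkeeping that the paper defers to \Cref{A.4.2}, and noting that the argument is uniform in the permutation $\pi$ — but these are elaborations of the same proof, not a different one.
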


\begin{proof}[\textbf{\upshape Proof of Proposition 3.3:}]
	Suppose that \eqref{cond_1} and \eqref{cond_2} are satisfied, then $s_2^2 \geq cd^{-2 \gamma}$ for a positive constant $c$, refer to \Cref{A.4.2} for more details. Notice that $c$ is a function of the true parameters $\bm p, \bm p^*, \bm q$ and $\bm q^*$, which are vectors of edge probabilities in \eqref{P&Q}. For large enough $d$, $\delta mds_2 \geq \sqrt{c} \delta md^{1-\gamma}$, which goes to $\infty$, as $d \rightarrow \infty$. Note that random variable $(\widetilde{\bm A}_w^{G_l})_{ij} - (\widetilde{\bm A}_w^{H_l})_{\pi (\widehat{\bm Q}_{\text{com}})(ij)}$ takes values of -1, 0, 1, and also that $\mu = \mu_{ij}^G - \mu_{ij}^H \in [-1, 1]$. The value of $\left|(\widetilde{\bm A}_w^{G_l})_{ij} - (\widetilde{\bm A}_w^{H_l})_{\pi (\widehat{\bm Q}_{\text{com}})(ij)} - \mu \right|$ is between -2 and 2. It follows that for all large enough $d$, ${\left \{\left|(\widetilde{\bm A}_w^{G_l})_{ij} - (\widetilde{\bm A}_w^{H_l})_{\pi (\widehat{\bm Q}_{\text{com}})(ij)} - \mu \right|\ > \delta mds_2 \right \}}$ is empty and hence 
	\begin{center} 
		$1_{\left \{\left|(\widetilde{\bm A}_w^{G_l})_{ij} - (\widetilde{\bm A}_w^{H_l})_{\pi (\widehat{\bm Q}_{\text{com}})(ij)} - \mu \right|\ > \ \delta mds_2 \right \}} = 0$. 
	\end{center}
	Hence, we have as $d \rightarrow \infty$,
	\begin{align*}
		\begin{split}
			\left(\frac{1}{mds_2^2 } \right) \sum_{w=1}^d
			\sum_{l=1}^m 
			\sum_{i, j=1}^n
			E \left[ \left((\widetilde{\bm A}_w^{G_l})_{ij} - (\widetilde{\bm A}_w^{H_l})_{\pi (\widehat{\bm Q}_{\text{com}})(ij)} - \mu \right)^2 \,1_{\left \{\left|(\widetilde{\bm A}_w^{G_l})_{ij} - (\widetilde{\bm A}_w^{H_l})_{\pi (\widehat{\bm Q}_{\text{com}})(ij)} - \mu \right|\ > \ \delta mds_2 \right \}} \right] \rightarrow 0.
		\end{split}
	\end{align*} 
\end{proof}

The asymptotic distribution of $T$ under the alternative hypothesis $H_a$ has the same expression as that under the null hypothesis $H_0$. The conditions described in \eqref{cond_1} and \eqref{cond_2} impose limits on the two tails in the distribution of $(\widetilde{\bm A}_w^{G_l})_{ij} - (\widetilde{\bm A}_w^{H_l})_{\pi (\widehat{\bm Q}_{\text{com}})(ij)} - \mu$ to ensure that both tails encompass zero mass for large $d$. As a result, these conditions indicate that the variance of edge presence is constrained from below by $d^{-r}$. It is expected that the probabilities of edges within each community will display noticeable deviations from complete certainty (0 or 1), with the extent of this deviation depending on the value of $d$.

We further explore the convergence rate of $T$. Let 
\begin{center}
	$\mu = \displaystyle \frac{1}{mn(n-1)} \sum_{l=1}^m \sum_{u, v = 1}^K \sum_{i,j = 1}^n \left( \tilde{p}_{uu}^l 1_{\{i, j \in C_u\}} + \tilde{q}_{uv}^l 1_{\{i \in C_u, j \in C_v, u \neq v\}} \right)$, and \\
	$\sigma^2 = \displaystyle \frac{1}{m^2n^2(n-1)^2} \sum_{l=1}^m \sum_{u, v = 1}^K \sum_{i,j = 1}^n \left( \tilde{p}_{uu}^l (1 - \tilde{p}_{uu}^l) 1_{\{i, j \in C_u\}} + \tilde{q}_{uv}^l (1 - \tilde{q}_{uv}^l) 1_{\{i \in C_u, j \in C_v, u \neq v\}} \right)$.
\end{center}
Then the rate of approximation of $F(x) = \Pr \big(\sigma^{-1} \left(T - \mu \right) \leq x \big)$, i.e., $F(x)$ is the CDF of the normalized $T$,  can be shown by the standard normal distribution function $\Phi(y)$. Refer to Theorem 1.1 and Theorem 1.2 in \cite{deheuvels1989asymptotic}.

\begin{proposition}
	\label{prop3.4}
	(\cite{deheuvels1989asymptotic} Theorem 1.1 [p.283]) The asymptotic normality of $\sigma^{-1} \left(T - \mu \right)$, i.e., \(\sup\limits_x |F(x) - \Phi(x)| \rightarrow 0 \) as $d \rightarrow \infty$, holds if and only if as $d \rightarrow \infty$,
	\begin{center}
		$\sigma_{\dagger}^{2} = \displaystyle d
		\sum_{l=1}^m
		\sum_{u, v = 1}^K \sum_{i,j = 1}^n \left( \tilde{p}_{uu}^l (1 - \tilde{p}_{uu}^l) 1_{\{i, j \in C_u\}} + \tilde{q}_{uv}^l (1 - \tilde{q}_{uv}^l) 1_{\{i \in C_u, j \in C_v, u \neq v\}} \right) \rightarrow \infty$.
	\end{center}
\end{proposition}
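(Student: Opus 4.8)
The plan is to observe that, conditionally on the original data (equivalently, on the estimated matrices $\widehat{\bm P}^{G_l}_{\text{com}}$ and $\widehat{\bm Q}^{H_l}_{\text{com}}$), the statistic $T$ is an affine image of a single sum of independent, uniformly bounded Bernoulli random variables, and then to apply the necessary-and-sufficient criterion of \cite{deheuvels1989asymptotic} for the asymptotic normality of such sums. Write
\[
S = \sum_{w=1}^d \sum_{l=1}^m \sum_{i,j=1}^n Z_{w,l,ij}, \qquad Z_{w,l,ij} = \left((\widetilde{\bm A}_w^{G_l})_{ij} - (\widetilde{\bm A}_w^{H_l})_{\pi(\widehat{\bm Q}_{\text{com}})(ij)}\right)^2,
\]
so that $T = S/\left(mdn(n-1)\right)$. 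By the two Bernoulli cases recorded just before \Cref{theorem3.2}, each $Z_{w,l,ij}$ is $\text{Bernoulli}(\tilde{p}^l_{uu})$ or $\text{Bernoulli}(\tilde{q}^l_{uv})$ depending on the community memberships of $i$ and $j$; in particular $0 \leq Z_{w,l,ij} \leq 1$.

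First I would fix the independence structure. Conditionally on the estimated probability matrices, the bootstrapped entries are drawn independently across the replicate index $w$, across the sample index $l$, and across vertex pairs $(i,j)$ within the stochastic block model, so the $Z_{w,l,ij}$ form an independent (non-identically distributed) Bernoulli array. Because the deterministic normalizer $mdn(n-1)$ cancels in the standardization, it follows at once that
\[
\sigma^{-1}\left(T - \mu\right) = \frac{S - E[S]}{\sqrt{\Var(S)}}, \qquad \Var(S) = \sum_{w,l,i,j} \Var(Z_{w,l,ij}) = \sigma_{\dagger}^{2}.
\]
This is the key reduction: the quantity $\sigma_{\dagger}^{2}$ in the statement is exactly the total variance of the unnormalized Bernoulli sum $S$, and the claim becomes the asymptotic normality, in the Kolmogorov metric, of a standardized sum of independent bounded Bernoulli variables.

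I would then invoke Theorem 1.1 of \cite{deheuvels1989asymptotic}: for such a sum, $\sup_x |F(x) - \Phi(x)| \to 0$ if and only if its total variance diverges, which here reads $\sigma_{\dagger}^{2} \to \infty$. The sufficiency is essentially the Lindeberg verification already performed in \Cref{prop3.1} and \Cref{prop3.3} --- since the centered summands $Z_{w,l,ij} - E[Z_{w,l,ij}]$ are bounded by $1$, once $\sigma_{\dagger}^{2} \to \infty$ the truncation events in the Feller--Lindeberg sum are eventually empty and the classical CLT applies; the necessity is the genuinely new content, reflecting that if $\sigma_{\dagger}^{2}$ stayed bounded the sum would be governed by finitely many Bernoulli terms and could not be asymptotically Gaussian. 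The main obstacle is therefore not any computation but the verification that our triangular array --- taken with $d \to \infty$ and $n, m, K$ fixed --- meets the precise hypotheses of the cited theorem, chiefly the conditional independence of the replicates across $w$ and the identification $\Var(S) = \sigma_{\dagger}^{2}$; the conceptual heart is that boundedness of the Bernoulli summands is exactly what collapses the general Lindeberg criterion to the single scalar condition $\sigma_{\dagger}^{2} \to \infty$, rendering it both necessary and sufficient.
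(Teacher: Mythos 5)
Your reduction is correct and, since the proposition is explicitly attributed to Theorem~1.1 of \cite{deheuvels1989asymptotic}, invoking that theorem after verifying its hypotheses is a defensible proof strategy; but it is not the route the paper takes, and the difference is substantive. The paper does not cite the theorem wholesale for both directions: it obtains sufficiency from \Cref{prop3.5} (the Berry--Esseen-type bound $\sup_x (1+|x|^3)|F(x)-\Phi(x)| \leq c_1 \sigma_{\dagger}^{-1}$, which forces $\sup_x|F(x)-\Phi(x)| \to 0$ as soon as $\sigma_{\dagger} \to \infty$), and it then proves the necessity direction by a self-contained contradiction argument in the style of Deheuvels' own proof: assuming $\sigma_{d_t}^2 \to \sigma_{\dagger}^2 < \infty$ along a subsequence, it splits the sum into the part $S_d'$ with parameters $\tilde{p}_{uu}^l, \tilde{q}_{uv}^l \leq 1/2$ and the part $S_d''$ with parameters exceeding $1/2$, bounds expectations by variances, and concludes that the edge parameters must degenerate, which is incompatible with a Gaussian limit. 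By contrast, your sufficiency argument (bounded summands make the Lindeberg truncation events eventually empty once the total variance diverges) is more elementary than the paper's appeal to \Cref{prop3.5}, while your necessity direction is entirely outsourced to the citation --- precisely the half the paper bothers to write out, and the genuinely nontrivial half. What your approach buys is brevity and a clean isolation of the only model-specific work: conditional independence of the bootstrap replicates and the identification $\Var(S)=\sigma_{\dagger}^2$. What the paper's buys is partial self-containedness on the ``only if'' direction. Two caveats apply to both write-ups: your cancellation $\sigma \cdot mdn(n-1) = \sigma_{\dagger}$ holds only if $\sigma^2$ carries a factor $1/d$ (consistent with the variances $X_3, X_4$ in \Cref{theorem3.2}, but absent from the display of $\sigma^2$ preceding \Cref{prop3.4}, apparently a typo you have silently corrected); and for undirected graphs the entries indexed by $(i,j)$ and $(j,i)$ coincide, so independence across all \emph{ordered} pairs fails and the sums double count each edge --- an issue the paper glosses over as well, and which affects the variance identification only by a constant factor.
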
 
According to \Cref{prop3.4}, to attain asymptotic normality for $\sigma^{-1} \left(T - \mu \right)$, it is necessary to assume that $\sigma_{\dagger}^{2}$ diverges as $d$ approaches infinity. In simpler terms, the densities of edge presence, both within and between communities, should neither be excessively dense nor overly sparse. For the readers' convenience, concise proof is included in \Cref{appendix_a2}.

\begin{proposition}
	\label{prop3.5}
	(\cite{deheuvels1989asymptotic} Theorem 1.2 [p.283])
	For all $d \geq 1$, there exists a universal constant $c_1$ such that \[ \sup_{x} (1 + |x|^3)|F(x) - \Phi(x)| \leq c_1 \sigma_{\dagger}^{-1}. \]
\end{proposition}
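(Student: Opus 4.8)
The plan is to read \Cref{prop3.5} as a verbatim application of the non-uniform Berry--Esseen bound of \cite{deheuvels1989asymptotic} (their Theorem~1.2), so that the task reduces to matching our construction to its hypotheses; the only genuine computation is a third-moment estimate. First I would strip off the normalization in the definition of $T$: setting $S = mdn(n-1)\,T$, the statistic $S$ is a sum of the independent summands $Z_{w,l,ij} = \big((\widetilde{\bm A}_w^{G_l})_{ij} - (\widetilde{\bm A}_w^{H_l})_{\pi(\widehat{\bm Q}_{\text{com}})(ij)}\big)^2$ over $w\in\{1,\dots,d\}$, $l\in\{1,\dots,m\}$, $i,j\in\{1,\dots,n\}$, each Bernoulli with parameter $\tilde p^l_{uu}$ or $\tilde q^l_{uv}$ according to the two cases listed before \Cref{theorem3.2}. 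Since $T - \mu = (S - E[S])/\big(mdn(n-1)\big)$, the standardized variable $\sigma^{-1}(T-\mu)$ equals $(S - E[S])/\sqrt{\Var(S)}$; summing the per-coordinate Bernoulli variances over the $d$ independent replicates contributes the factor $d$ and shows $\Var(S)=\sigma_\dagger^2$, with $\sigma_\dagger^2$ as in \Cref{prop3.4}. Hence $F$ is precisely the distribution function of a standardized sum of independent bounded variables whose total variance is $\sigma_\dagger^2$.

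Next I would supply the moment input. The $Z_{w,l,ij}$ are independent, since entries of a single adjacency matrix are mutually independent under the IER model and distinct replicates $w$ and graphs $l$ are drawn independently in the bootstrap. Moreover each $Z_{w,l,ij}\in\{0,1\}$, so $|Z_{w,l,ij}-E[Z_{w,l,ij}]|\le 1$, and therefore $E\big[|Z_{w,l,ij}-E[Z_{w,l,ij}]|^3\big] \le E\big[|Z_{w,l,ij}-E[Z_{w,l,ij}]|^2\big] = \Var(Z_{w,l,ij})$. Summing over all indices, the Lyapunov numerator $L_3 = \sum_{w,l,ij} E\big[|Z_{w,l,ij}-E[Z_{w,l,ij}]|^3\big]$ obeys $L_3 \le \Var(S) = \sigma_\dagger^2$. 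This boundedness-to-variance comparison is the single inequality that upgrades the generic Berry--Esseen rate $L_3/\sigma_\dagger^3$ to the stated $\sigma_\dagger^{-1}$.

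With the standardization identified and the moments in hand, I would invoke \cite{deheuvels1989asymptotic} Theorem~1.2, which for a standardized sum of independent variables furnishes a universal constant $c_1$ with $\sup_x (1+|x|^3)|F(x)-\Phi(x)| \le c_1\,L_3/\sigma_\dagger^3$. Substituting $L_3\le\sigma_\dagger^2$ gives $L_3/\sigma_\dagger^3 \le \sigma_\dagger^{-1}$, and hence $\sup_x (1+|x|^3)|F(x)-\Phi(x)| \le c_1\,\sigma_\dagger^{-1}$ for every $d\ge 1$, as claimed.

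I expect the main obstacle to be the bookkeeping that justifies treating $\sigma^{-1}(T-\mu)$ as a \emph{fixed} standardized sum of independent variables, rather than any sharp analytic estimate. Two points need care. First, the infimum over permutations $\pi$ in the definition of $T$ must be removed by fixing the permutation singled out in \Cref{prop3.1} that renders the paired entries identically distributed; only then are the summands the genuine independent Bernoulli variables above and the Berry--Esseen framework applicable. Second, the diagonal indices $i=j$ contribute deterministically zero (no self-loops force the difference to vanish, with zero variance), which is consistent with the $n(n-1)$ normalization and leaves $\sigma_\dagger^2$ unaffected. Once the identity $\Var(S)=\sigma_\dagger^2$ and the bound $L_3\le\sigma_\dagger^2$ are established, the conclusion follows immediately from the cited theorem.
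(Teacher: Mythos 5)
Your proposal is correct and takes essentially the same route as the paper: the paper offers no written proof of \Cref{prop3.5}, treating it as a direct citation of Theorem~1.2 of \cite{deheuvels1989asymptotic} (a non-uniform Berry--Esseen bound for sums of independent, non-identically distributed Bernoulli variables), and your reduction---independence of the bootstrapped summands, the identity $\Var(S)=\sigma_{\dagger}^2$, and the bound $E\left[|Z-E[Z]|^3\right]\leq \Var(Z)$ for summands bounded by one---is exactly the bookkeeping that legitimizes applying that citation to the normalized statistic $\sigma^{-1}(T-\mu)$. The only caveat, which the paper itself also glosses over, is that for undirected graphs the symmetry $(\widetilde{\bm A})_{ij}=(\widetilde{\bm A})_{ji}$ makes the ordered-pair summands pairwise dependent, so $\Var(S)$ acquires a factor of $2$; this perturbs only the universal constant $c_1$, not the $\sigma_{\dagger}^{-1}$ rate.
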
 
In \Cref{prop3.5}, the rate of approximation of $F(x)$ by $\Phi(x)$ is demonstrated. Specifically, the scaled deviation between the CDF of the normalized variable $T$ and the standard normal distribution is bounded by the inverse of $\sigma_{\dagger}$ at the first order.

%%%%%%%%%%%%%%%%%%%%%%%%%%%%%%%%%%%%%%%%%%%%%%%%%%%%%%%%%%%%%%%%%%%%%%%%%%

\subsection{Numerical Results} 
We represent the total number of nodes in graphs $G$ and $H$ as $n^G = \sum_{u=1}^K n_u^G$ and $n^H = \sum_{v=1}^K n_v^H$, respectively. In matrix $\bm P_{\text{com}}$, edges occur independently within each community with a probability of $p$, and between two communities with a probability of $q$. Similarly, matrix $\bm Q_{\text{com}}$ has the same block structure as $\bm P_{\text{com}}$, but the edge probabilities within each community and between communities are adjusted to $p^*$ and $q^*$ respectively, where $p^* = p + \epsilon$ and $q^* = q + \epsilon$. Under the null hypothesis, $\epsilon = 0$, while under the alternative hypothesis, we set $\epsilon > 0$.

For our analysis, we assign the values $n^G = 200$, $n^H = 100$, and choose $p$ to be 0.1 and $q$ to be 0.05. In the case of the balanced partition, we set $\tau = 0$, and for the imbalanced partition, $\tau = 0.5$. Under the null hypothesis, we set $\epsilon = 0$, while under the alternative hypothesis, we use $\epsilon \in \{0.005, 0.01\}$. One of advantages of our approach lies in its ability to discern even the most nuanced disparities in network structures. Our method exhibits outstanding performance when using a smaller $\epsilon$ value, specifically 0.005, in contrast to the previous study provided by \cite{ghoshdastidar2018practical}, which employed an $\epsilon$ value of 0.01. This suggests that our approach can effectively pinpoint minor distinctions within the intrinsic structure of two random graphs by increasing the number of bootstraps.

Within the stochastic block model, each matrix in two sequences, denoted as $(\widehat{\bm P}^{G_l}_{\text{com}})_{l \in \{1, \ldots, m\}}$ and $(\widehat{\bm Q}^{H_l}_{\text{com}})_{l \in \{1, \ldots, m\}}$, undergoes replication, resulting in a total of $d$ duplicates, where $d$ is an integer within the range of $[2, 10]$. In our analysis, we consider the sample size: $m = 2$ and different community sizes: $K \in \{2, 3, 5, 6\}$, and fix the significance level at $\alpha = 5\%$. The size of the bootstrapped adjacency matrices $n$ is chosen to be 100, 99, 100, and 96 in the specified order, corresponding to the values of $K$.

\begin{table}[H]
	\captionof{table}{$\bm P(\textbf{Type I error})$ \textbf{for balanced block sizes in bootstrapping}. Probabilities of type I errors in bootstrapping with balanced block sizes as $d$ increases from 2 to 10 for $K \in \{2,3,5,6\}$ and $\epsilon \in \{0.005, 0.01\}$.} \label{tab2}
	\centering 	
	\setlength{\tabcolsep}{0.9em}
	\renewcommand{\arraystretch}{1.1}
	
	\smallskip
	\small
	\begin{tabular}{cclccccccc}
		\cline{1-10}
		& \multicolumn{5}{c|}{$\epsilon=0.005$} 
		& \multicolumn{4}{c}{$\epsilon=0.01$}
		\\ \hline
		\multicolumn{1}{c|}{$d$}  & \multicolumn{2}{c|}{$K=2$}            & \multicolumn{1}{c|}{$K=3$} & \multicolumn{1}{c|}{$K=5$} & \multicolumn{1}{c|}{$K=6$} & \multicolumn{1}{c|}{$K=2$} & \multicolumn{1}{c|}{$K=3$} & \multicolumn{1}{c|}{$K=5$} & \multicolumn{1}{c}{$K=6$} \\ \hline
		\multicolumn{1}{c|}{2}  & \multicolumn{2}{c|}{0.052} & \multicolumn{1}{c|}{0.037}  & \multicolumn{1}{c|}{0.037}  & \multicolumn{1}{c|}{0.052}  & \multicolumn{1}{c|}{0.044}  & \multicolumn{1}{c|}{0.050}  & \multicolumn{1}{c|}{0.055}  &  0.053 \\ 
		\multicolumn{1}{c|}{3}  & \multicolumn{2}{c|}{0.045} & \multicolumn{1}{c|}{0.051}  & \multicolumn{1}{c|}{0.049}  & \multicolumn{1}{c|}{0.056}  & \multicolumn{1}{c|}{0.058}  & \multicolumn{1}{c|}{0.045}  & \multicolumn{1}{c|}{0.053}  &  0.056 \\ 
		\multicolumn{1}{c|}{4}  & \multicolumn{2}{c|}{0.053} & \multicolumn{1}{c|}{0.055}  & \multicolumn{1}{c|}{0.051}  & \multicolumn{1}{c|}{0.053}  & \multicolumn{1}{c|}{0.047}  & \multicolumn{1}{c|}{0.057}  & \multicolumn{1}{c|}{0.049}  &  0.051 \\ 
		\multicolumn{1}{c|}{5}  & \multicolumn{2}{c|}{0.042} & \multicolumn{1}{c|}{0.057}  & \multicolumn{1}{c|}{0.055}  & \multicolumn{1}{c|}{0.047}  & \multicolumn{1}{c|}{0.053}  & \multicolumn{1}{c|}{0.067}  & \multicolumn{1}{c|}{0.057}  &  0.066 \\ 
		\multicolumn{1}{c|}{6}  & \multicolumn{2}{c|}{0.056} & \multicolumn{1}{c|}{0.048}  & \multicolumn{1}{c|}{0.044}  & \multicolumn{1}{c|}{0.048}  & \multicolumn{1}{c|}{0.043}  & \multicolumn{1}{c|}{0.043}  & \multicolumn{1}{c|}{0.042}  &  0.052 \\ 
		\multicolumn{1}{c|}{7}  & \multicolumn{2}{c|}{0.056} & \multicolumn{1}{c|}{0.054}  & \multicolumn{1}{c|}{0.052}  & \multicolumn{1}{c|}{0.053}  & \multicolumn{1}{c|}{0.049}  & \multicolumn{1}{c|}{0.049}  & \multicolumn{1}{c|}{0.058}  &  0.054 \\ 
		\multicolumn{1}{c|}{8}  & \multicolumn{2}{c|}{0.041} & \multicolumn{1}{c|}{0.065}  & \multicolumn{1}{c|}{0.058}  & \multicolumn{1}{c|}{0.058}  & \multicolumn{1}{c|}{0.049}  & \multicolumn{1}{c|}{0.050}  & \multicolumn{1}{c|}{0.054}  &  0.058 \\ 
		\multicolumn{1}{c|}{9}  & \multicolumn{2}{c|}{0.054} & \multicolumn{1}{c|}{0.059}  & \multicolumn{1}{c|}{0.049}  & \multicolumn{1}{c|}{0.066}  & \multicolumn{1}{c|}{0.054}  & \multicolumn{1}{c|}{0.044}  & \multicolumn{1}{c|}{0.053}  &  0.042 \\ 
		\multicolumn{1}{c|}{10} & \multicolumn{2}{c|}{0.051} & \multicolumn{1}{c|}{0.050}  & \multicolumn{1}{c|}{0.054}  & \multicolumn{1}{c|}{0.059}  & \multicolumn{1}{c|}{0.053}  & \multicolumn{1}{c|}{0.051}  & \multicolumn{1}{c|}{0.045}  &  0.053 \\ \hline
	\end{tabular}    
\end{table}

\begin{table}[H]
	\captionof{table}{\textbf{Test power for balanced block sizes in bootstrapping}. Test power is illustrated as $d$ increases from 2 to 10, with balanced block sizes and considering $K \in \{2,3,5,6\}$ and $\epsilon \in \{0.005, 0.01\}$.} \label{tab3}
	\centering 	
	\setlength{\tabcolsep}{0.9em}
	\renewcommand{\arraystretch}{1.1}
	
	\smallskip
	\small
	\begin{tabular}{cclccccccc}
		\cline{1-10}
		& \multicolumn{5}{c|}{$\epsilon=0.005$} 
		& \multicolumn{4}{c}{$\epsilon=0.01$}  \\ \hline
		\multicolumn{1}{c|}{$d$}  & \multicolumn{2}{c|}{$K=2$} & \multicolumn{1}{c|}{$K=3$} & \multicolumn{1}{c|}{$K=5$} & \multicolumn{1}{c|}{$K=6$} & \multicolumn{1}{c|}{$K=2$} & \multicolumn{1}{c|}{$K=3$} & \multicolumn{1}{c|}{$K=5$} & \multicolumn{1}{c}{$K=6$} \\ \hline
		\multicolumn{1}{c|}{2}  & \multicolumn{2}{c|}{0.652} & \multicolumn{1}{c|}{0.676}  & \multicolumn{1}{c|}{0.702}  & \multicolumn{1}{c|}{0.712}  & \multicolumn{1}{c|}{0.954}  & \multicolumn{1}{c|}{0.975}  & \multicolumn{1}{c|}{0.973}  &  0.978 \\ 
		\multicolumn{1}{c|}{3}  & \multicolumn{2}{c|}{0.699} & \multicolumn{1}{c|}{0.720}  & \multicolumn{1}{c|}{0.768}  & \multicolumn{1}{c|}{0.761}  & \multicolumn{1}{c|}{0.981}  & \multicolumn{1}{c|}{0.990}  & \multicolumn{1}{c|}{0.982}  &  0.991 \\ 
		\multicolumn{1}{c|}{4}  & \multicolumn{2}{c|}{0.765} & \multicolumn{1}{c|}{0.772}  & \multicolumn{1}{c|}{0.807}  & \multicolumn{1}{c|}{0.814}  & \multicolumn{1}{c|}{0.989}  & \multicolumn{1}{c|}{0.993}  & \multicolumn{1}{c|}{0.993}  &  0.996 \\ 
		\multicolumn{1}{c|}{5}  & \multicolumn{2}{c|}{0.793} & \multicolumn{1}{c|}{0.832}  & \multicolumn{1}{c|}{0.818}  & \multicolumn{1}{c|}{0.839}  & \multicolumn{1}{c|}{0.991}  & \multicolumn{1}{c|}{0.992}  & \multicolumn{1}{c|}{0.997}  &  0.993 \\ 
		\multicolumn{1}{c|}{6}  & \multicolumn{2}{c|}{0.807} & \multicolumn{1}{c|}{0.832}  & \multicolumn{1}{c|}{0.879}  & \multicolumn{1}{c|}{0.866}  & \multicolumn{1}{c|}{0.991}  & \multicolumn{1}{c|}{0.997}  & \multicolumn{1}{c|}{0.999}  &  0.999 \\ 
		\multicolumn{1}{c|}{7}  & \multicolumn{2}{c|}{0.826} & \multicolumn{1}{c|}{0.860}  & \multicolumn{1}{c|}{0.854}  & \multicolumn{1}{c|}{0.896}  & \multicolumn{1}{c|}{0.992}  & \multicolumn{1}{c|}{0.996}  & \multicolumn{1}{c|}{0.996}  &  0.999 \\ 
		\multicolumn{1}{c|}{8}  & \multicolumn{2}{c|}{0.847} & \multicolumn{1}{c|}{0.869}  & \multicolumn{1}{c|}{0.881}  & \multicolumn{1}{c|}{0.895}  & \multicolumn{1}{c|}{0.994}  & \multicolumn{1}{c|}{0.997}  & \multicolumn{1}{c|}{0.993}  &  0.998 \\ 
		\multicolumn{1}{c|}{9}  & \multicolumn{2}{c|}{0.851} & \multicolumn{1}{c|}{0.888}  & \multicolumn{1}{c|}{0.887}  & \multicolumn{1}{c|}{0.914}  & \multicolumn{1}{c|}{0.998}  & \multicolumn{1}{c|}{0.998}  & \multicolumn{1}{c|}{0.996}  &  1.000 \\ 
		\multicolumn{1}{c|}{10} & \multicolumn{2}{c|}{0.836} & \multicolumn{1}{c|}{0.902}  & \multicolumn{1}{c|}{0.909}  & \multicolumn{1}{c|}{0.904}  & \multicolumn{1}{c|}{0.997}  & \multicolumn{1}{c|}{0.997}  & \multicolumn{1}{c|}{0.999}  &  0.999 \\ 
		\hline
	\end{tabular}   
\end{table}

When $\epsilon$ is set to 0.005, \Cref{tab2} demonstrates the probability of type I error for balanced block sizes computed from 1000 independent experiment runs. Under the null model, the null hypothesis is rejected at a rate of approximately $\alpha = 5\%$. Conversely, under the alternative model, a high test power (close to 1) is desirable, as illustrated in \Cref{tab3}. Notably, when $\epsilon$ is increased to 0.01, our test statistic exhibits superior test power compared to the test statistic, \textbf{Asymp-Normal}, proposed in \cite{ghoshdastidar2018practical}. Additional simulation results for scenarios with imbalanced block sizes can be found in \Cref{appendix_a3}.

\begin{center}
	\includegraphics[scale=.69]{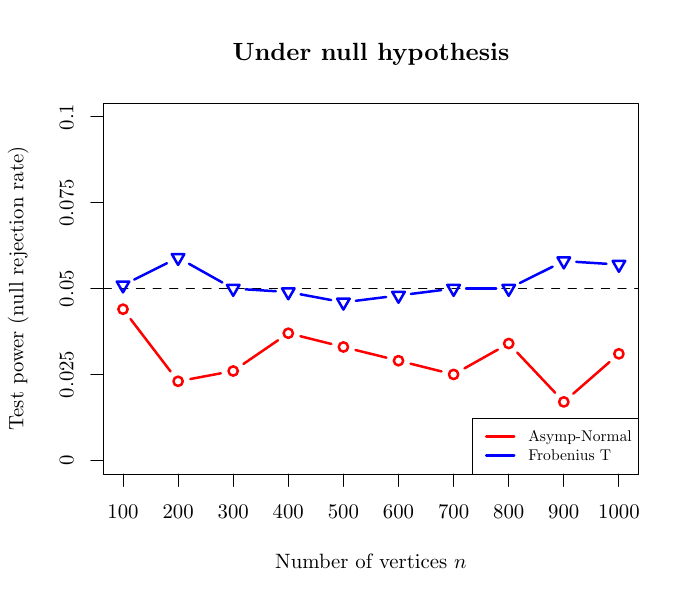} 
	\includegraphics[scale=.69]{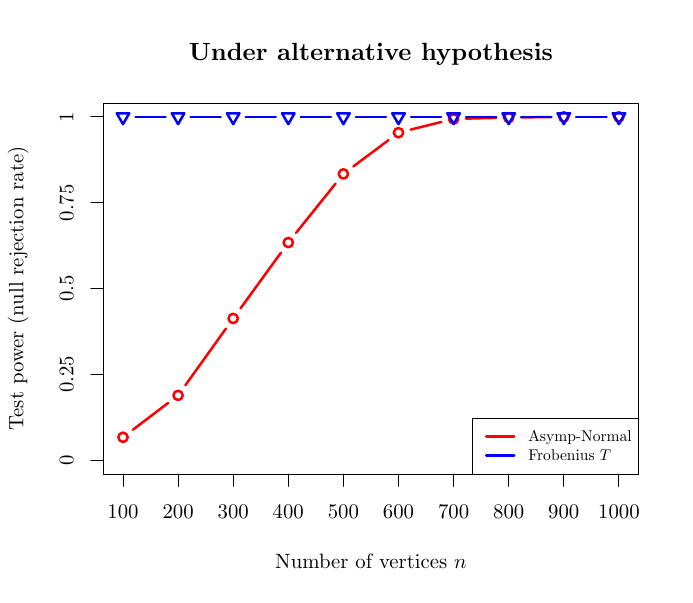} 
	\captionof{figure}{Power of \textbf{Asymp-Normal} and \textbf{Frobenius} $T$ for increasing number of vertices $n$ (and $n^G$), and for $m = 2$, $K = 2$, $\epsilon = 0.04$, and $\tau = 0$. The dotted line for case of null hypothesis corresponds to the significance level of 5\%.} \label{fig1}
\end{center}

To evaluate the reliability and effectiveness of our test statistic $T$ compared to \textbf{Asymp-Normal}, we analyze their probabilities of type I error and test power. The assessment is carried out under the conditions of $m = 2$, $K = 2$, $\epsilon = 0.04$, and $\tau = 0$. For \textbf{Asymp-Normal}, we systematically vary the network size $n$ from 100 to 1000 in increments of 100, maintaining consistency across both graphs $G$ and $H$. To minimize potential variations arising from parameter choices, we independently vary $n^G$, ranging from 100 to 1000 in increments of 100, and set $n^H$ to always be 100 units larger than $n^G$, all while maintaining a constant value of $d$ at 2.

In \Cref{fig1}, under the null hypothesis, the probability of type I error for our Frobenius test statistic $T$ fluctuates around 5\%, whereas \textbf{Asymp-Normal} exhibits a comparatively smaller probability of type I error. However, for small network sizes, Frobenius $T$ demonstrates 100\% test power, while \textbf{Asymp-Normal} shows significantly lower test power.

%%%%%%%%%%%%%%%%%%%%%%%%%%%%%%%%%%%%%%%%%%%%%%%%%%%%%%%%%%%%%%%%%%

\section{Application to fMRI data} \label{app_sec}

In neuroscience and cognitive research, the analysis of functional Magnetic Resonance Imaging (fMRI) data plays a pivotal role in understanding brain network effects and function \citep{bullmore2009complex, smith2009correspondence, van2010exploring, power2011functional}. We examine fMRI data collected from the StarPlus study using our proposed Frobenius test statistic to determine the impact of study conditions on brain function and connectivity.

The experiment is structured as a series of trials, with each measurement taken in regular intervals over the course of these trials for each participant. During specific intervals, participants are either given the opportunity to rest or are instructed to fixate their gaze on a designated point displayed on the screen. In other trials, subjects are presented with both a picture and a sentence, where they are tasked with determining whether the sentence accurately describes the picture. These trials are carefully designed to include the sequential presentation of the sentence and picture, with half of the trials displaying the picture first, and the other half introducing the sentence initially.

We adopt an approach by analyzing the during the first stimulus (images 1 - 8), the second stimulus (images 17 - 24), and the control group (images 33 - 54) for each subject, with the exception of subject 05675, whose control group comprises images 33 - 53. This selective data extraction allows for a meticulous exploration of brain responses during different stimuli, offering insights into whether our cognitive apparatus operates differently when subjected to various visual and linguistic inputs.

For every image, a column-by-column matching process is implemented based on Regions of Interest (ROI's). Subsequently, data for each subject is averaged over multiple voxels within the same ROI for three networks: the first stimulus, the second stimulus, and the control group. This averaging procedure results in the creation of three matrices per participant corresponding to each period of the trial.

The correlation matrix is derived from the pairwise comparison between rows of ROI brain functional activity matrix for each of the three trial periods for each participant. These correlation matrices across the ROIs are transformed to adjacency matrices using a rank-determined threshold on their magnitude, which ensures the identical edge density of three groups for each subject. Subsequently, random graphs are constructed using these adjacency matrices.

In the subsequent phase, community-wise edge probability matrices are derived using a sequential community detection algorithm. This process commences with hypothesis testing for $H_0: K=1$ vs. $H_a: K>1$, applied to all three networks. If all three networks reject $H_0$, the procedure is repeated for $H_0: K=2$ vs. $H_a: K>2$. This iterative process continues until one of the networks fails to reject $H_0$. The value of $K$ resulting from this outcome is then applied to all three networks. The matrices derived are denoted as $\bm P^1_{\text{com}}$, $\bm P^2_{\text{com}}$, and $\bm P^c_{\text{com}}$, representing community-wise edge probability matrices for the first stimulus, the second stimulus, and the control group, respectively.

This paves the way for three following sets of hypothesis testing, each tailored to evaluate the specific properties and characteristics of the networks generated through this process.

\begin{center}
	$H_0: \bm P^1_{\text{com}} = \bm P^2_{\text{com}}$ vs $H_a: \bm P^1_{\text{com}} \neq \bm P^2_{\text{com}}$, \\
	$H_0: \bm P^1_{\text{com}} = \bm P^c_{\text{com}}$ vs $H_a: \bm P^1_{\text{com}} \neq \bm P^c_{\text{com}}$, \\
	$H_0: \bm P^2_{\text{com}} = \bm P^c_{\text{com}}$ vs $H_a: \bm P^2_{\text{com}} \neq \bm P^c_{\text{com}}$. \\
\end{center}

For each participant, we employ an appropriate threshold to create adjacency matrices for all three networks. We use the sequential community detection algorithm to determine the number of communities. The network size for bootstrapping is fixed at 24, matching the number of ROI's. Subsequently, taking into account the relatively small size of bootstrapped samples, we introduce normal random errors with a standard deviation of $\tau = 0$ to establish community sizes in bootstrapping.

\Crefrange{fig2}{fig4} show the graph of Subject 05710 with labeled ROI's for the first stimulus, the second stimulus, and the control group.
\begin{center}
	\includegraphics[scale=.5]{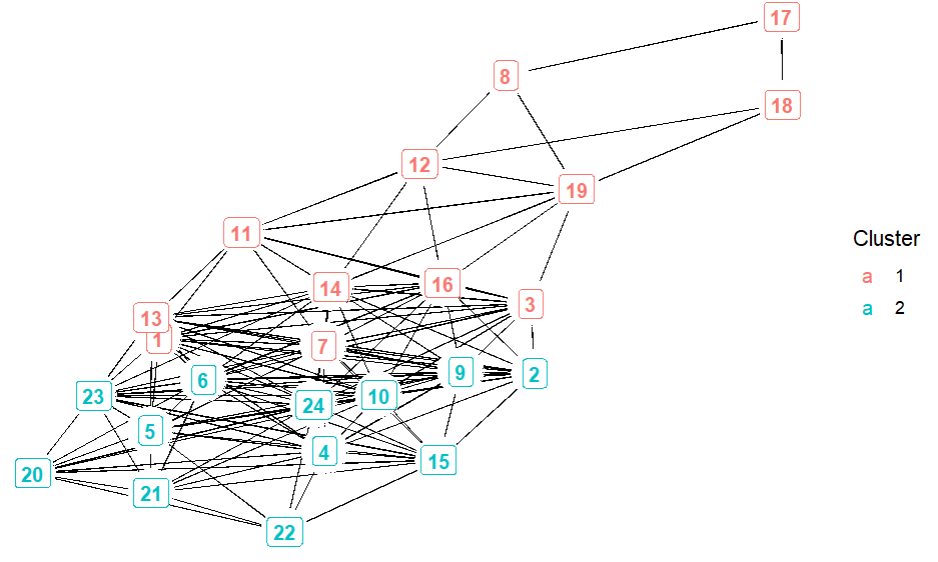} 
	\captionof{figure}{Subject 05710 ROI's labeled graph for Stimulus 1} \label{fig2}
\end{center}

\begin{center}
	\includegraphics[scale=.5]{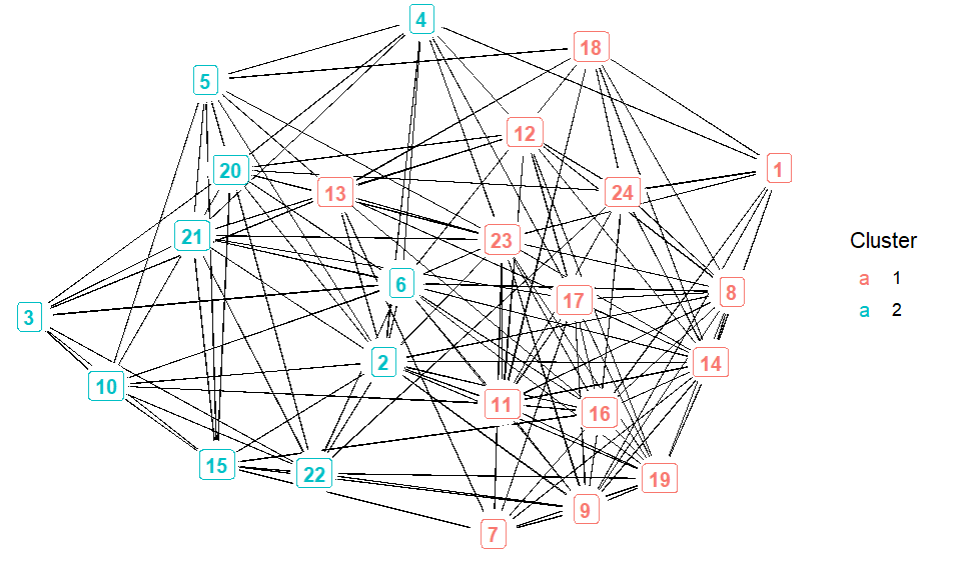} 
	\captionof{figure}{Subject 05710 ROI's labeled graph for Stimulus 2} \label{fig3}
\end{center}

\begin{center}
	\includegraphics[scale=.5]{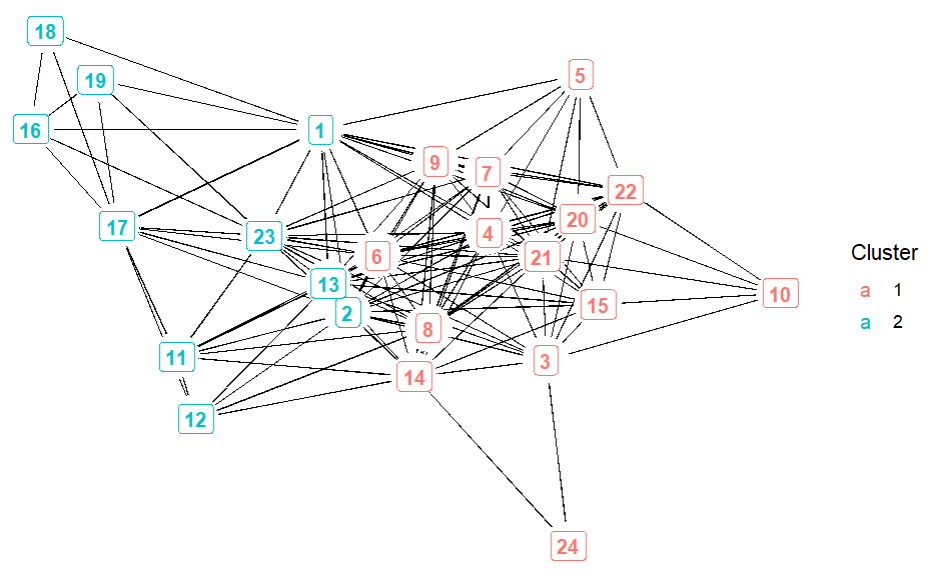} 
	\captionof{figure}{Subject 05710 ROI's labeled graph for Control group} \label{fig4}
\end{center}

\begin{center}
	\captionof{table}{\textbf{Results of hypothesis testing for six subjects across three sets of comparisons with the value of $d = 100$.} Remark: $p$ values are shown in the parenthesis, with the notation that $*$ represents $p \leq 0.05$, $**$ represents $p \leq 0.01$, $***$ represents $p \leq 0.001$, and $****$ represents $p \leq 0.0001$.} \label{tab4}
	\centering 	
	\setlength{\tabcolsep}{0.9em}
	\renewcommand{\arraystretch}{1.1}
	
	\smallskip
	\small
	\begin{tabular}{ l || l | l | l | l }   
		\hline
		\text{Subject ID} & $\hat{K}$ & \text{Stimulus 1 vs Stimulus 2} & \text{Stimulus 1 vs Control} & \text{Stimulus 2 vs Control} \\
		\hline
		04799 & 2 & \text{Reject $H_0 (****)$} & \text{Fail to Reject $H_0$} & \text{Fail to Reject $H_0$} \\ 
		04820 & 2 & \text{Reject $H_0 (***)$} & \text{Fail to Reject $H_0$} & \text{Reject $H_0 (****)$} \\ 
		04847 & 2 & \text{Reject $H_0 (****)$} & \text{Reject $H_0 (****)$} & \text{Fail to Reject $H_0$} \\ 
		05675 & 2 & \text{Reject $H_0 (****)$} & \text{Reject $H_0 (****)$} & \text{Fail to Reject $H_0$} \\
		05680 & 2 & \text{Reject $H_0 (*)$} & \text{Reject $H_0 (****)$} & \text{Fail to Reject $H_0$} \\ 
		05710 & 2 & \text{Reject $H_0 (****)$} & \text{Fail to Reject $H_0$} & \text{Reject $H_0 (**)$} \\ 
		\hline
	\end{tabular}
\end{center}

In light of the findings displayed in \Cref{tab4}, for most subjects there are significant differences in brain activity patterns when subjects are exposed to sentences and pictures. Nevertheless, the differentiation of ROI's during a stimulus versus a rest period varies among individuals.

This fMRI dataset reveals an interesting characteristic. If you intend to train a classifier to distinguish between when the subject is viewing a picture or a sentence, you are likely to achieve the highest accuracy by restricting your analysis to the following ROI's: `CALC,' `LIPL,' `LT,' `LTRIA,' `LOPER,' `LIPS,' and `LDLPFC'. The ratios of the mentioned ROI's among communities in each graph for each subject are presented in \Cref{tab5}. This suggests higher importance in the activitation of these ROIs for sentence and picture comprehension.

\begin{center}
	\captionof{table}{Ratios of significant ROI's among communities in each random graph for each subject} \label{tab5}
	\centering 	
	\setlength{\tabcolsep}{0.9em}
	\renewcommand{\arraystretch}{1.1}
	
	\smallskip
	\small
	\begin{tabular}{ l || l | l | l }   
		\hline
		\text{Subject ID} & \text{Stimulus 1} & \text{Stimulus 2} & \text{Control} \\
		\hline
		04799 & 4:3 & 4:3 & 4:3 \\ 
		04820 & 4:3 & 4:3 & 4:3 \\ 
		04847 & 6:1 & 4:3 & 5:2 \\ 
		05675 & 4:3 & 4:3 & 4:3 \\
		05680 & 6:1 & 4:3 & 5:2 \\ 
		05710 & 4:3 & 4:3 & 4:3 \\ 
		\hline
	\end{tabular}
\end{center}

Within each graph, every community comprises some of these important ROI's, indicating the rationality of the community structures in all the graphs obtained by the sequential community detection algorithm.

%%%%%%%%%%%%%%%%%%%%%%%%%%%%%%%%%%%%%%%%%%%%%%%%%%%%%%%%%%%%%%%%%%%%%%%%%%%%%%%%%%%%%%%%

\section{Discussion} \label{dis_sec}

This article presents a novel approach for comparing two-sample large graphs with distinct vertex sets. The approach includes generating bootstrapped adjacency matrices based on estimated community-wise edge probability matrices, extracting density information regarding the presence of edges within and between communities from the samples. A Frobenius test statistic is constructed, demonstrating convergence to a normal distribution with an increasing number of bootstraps. Its superior capability to identify subtle differences distinguishes it from existing methods. Moreover, its broader potential applications stem from its intrinsic design for random graphs of unequal sizes.

The methods for testing two-sample large random graphs have seen rapid development, and their applications in many fields are intriguing and exciting, like neuroscience and social science. However, there remains ample room for researchers to explore innovative methods and broader applications. For instance, although the proposed test statistic is applicable to directed graphs, it does not encompass both in-degrees and out-degrees, which are fundamental measures in understanding the flow and connectivity within such networks. This inspires future research to consider incorporating these structural features into the study of directed graphs. We could also anticipate the integration of network data with advanced machine learning techniques, such as network data clustering.

 %%%%%%%%%%%%%%%%%%%%%%%%%%%%%%%%%%%%%%%%%%%%%%%%%%%%%%%%%%%%%%%%%%%%%%%%%%%%%%%%%%%%%%%

\renewcommand{\thesection}{\Alph{section}.\arabic{section}}
\appendix
\section{Appendix} \label{appendix}
\subsection{Proof of \Cref{prop3.1}} \label{appendix_a1}
\begin{proof}
	Under the null hypothesis, the optimal permutation $\pi$ can be found so that $(\widetilde{\bm A}_w^{G_l})_{ij}$ and  $(\widetilde{\bm A}_w^{H_l})_{\pi(\widehat{\bm Q}_{\text{com}})(ij)}$ have the same distribution. For notation simplicity, we drop the notation for the optimal permutation $\pi$ and we simply denote $(\widetilde{\bm A}_w^{H_l})_{\pi(\widehat{\bm Q}_{\text{com}})(ij)}$ as  $(\widetilde{\bm A}_w^{H_l})_{ij}$ in the rest of the proof. Using the general form of the simple inequality $\sqrt{a+ b+c+d} \leq \sqrt{a+b} +\sqrt{c+d}$, for nonnegative values $a, b, c, d,$  we estimate the square root of the term in the conclusion of the proposition as 
	\begin{align} \label{K28April23.4}
		\begin{split} 
			& \left( \frac{1}{mds_1^2} \right)
			\sum_{w=1}^d
			\sum_{l=1}^m 
			\sum_{i, j=1}^n
			E\left[ \left((\widetilde{\bm A}_w^{G_l})_{ij} - (\widetilde{\bm A}_w^{H_l})_{ij}\right)^2 \, 1_{ \left\{\left|(\widetilde{\bm A}_w^{G_l})_{ij} - (\widetilde{\bm A}_w^{H_l})_{ij}\right| \ > \ \delta md s_1 \right\} } \right] \\ 
			& \hspace{0.5cm} = \sum_{i,j=1}^n \left( \frac{1}{mds_1^2}\right)\,
			\sum_{w=1}^d\,
			\sum_{l=1}^m\,  E\left[\left((\widetilde{\bm A}_w^{G_l})_{ij} - (\widetilde{\bm A}_w^{H_l})_{ij}\right)^2 \, 1_{\left\{\left|(\widetilde{\bm A}_w^{G_l})_{ij} - (\widetilde{\bm A}_w^{H_l})_{ij}\right| \ >\  \delta mds_1\right\}} \right]. 
		\end{split}
	\end{align}
	We now focus on the summands in the above. First, for the term $\big((\widetilde{\bm A}_w^{G_l})_{ij} - (\widetilde{\bm A}_w^{H_l})_{ij}\big)^2$ in the summand, we use  the inequality $(a+b)^2 = a^2+2ab+b^2 \leq 2(a^2+b^2),$ for all real values $a, b$, to  see that 
	\begin{align} \label{K28April23.5}
		\begin{split} 
			& \left((\widetilde{\bm A}_w^{G_l})_{ij} - (\widetilde{\bm A}_w^{H_l})_{ij}\right)^2 = \left(\left((\widetilde{\bm A}_w^{G_l})_{ij} -\mu_{ij} \right)+\left(\mu_{ij}  - (\widetilde{\bm A}_w^{H_l})_{ij}\right) \right)^2 \\
			& \hspace{0.5cm} \leq 2 \left(\left((\widetilde{\bm A}_w^{G_l})_{ij} -\mu_{ij} \right)^2 +\left((\widetilde{\bm A}_w^{H_l})_{ij}- \mu_{ij}\right)^2 \right).
		\end{split}
	\end{align}
	Second, we focus on the indicator function in \eqref{K28April23.4}. By the Triangle Inequality, 
	\begin{align*} 
		\begin{split} 
			\left| (\widetilde{\bm A}_w^{G_l})_{ij} - (\widetilde{\bm A}_w^{H_l})_{ij} \right|  \ \leq  \left| (\widetilde{\bm A}_w^{G_l})_{ij} -\mu_{ij} \right| + \left| (\widetilde{\bm A}_w^{H_l})_{ij} -\mu_{ij} \right|.
		\end{split}
	\end{align*}
	Hence, if $\left| (\widetilde{\bm A}_w^{G_l})_{ij} - (\widetilde{\bm A}_w^{H_l})_{ij} \right| > \delta mds_1$, then either 
	\begin{align*} 
		\begin{split} 
			\left|(\widetilde{\bm A}_w^{G_l})_{ij} - \mu_{ij} \right| \  > \  \frac{\delta}{2}  mds_1 \   \ \  \mbox{or } \ \ \ \left|(\widetilde{\bm A}_w^{H_l})_{ij} - \mu_{ij} \right| \  > \ \frac{\delta}{2} mds_1.
		\end{split}
	\end{align*}
	In other words, the set $\left\{\left|(\widetilde{\bm A}_w^{G_l})_{ij} - (\widetilde{\bm A}_w^{H_l})_{ij}\right| > \delta mds_1\right\}$ of the indicator function in \eqref{K28April23.4} is contained in the union 
	\begin{align*} 
		\begin{split} 
			\left\{\left|(\widetilde{\bm A}_w^{G_l})_{ij} - \mu_{ij} \right| > \frac{\delta}{2} mds_1\right\}  \cup  
			\left\{\left|(\widetilde{\bm A}_w^{H_l})_{ij} - \mu_{ij} \right| >  \frac{\delta}{2} mds_1\right\}. 
		\end{split}
	\end{align*}
	Thus we have 
	\begin{align} \label{K28April23.6}
		\begin{split}  
			1_{\left\{\left|(\widetilde{\bm A}_w^{G_l})_{ij} - (\widetilde{\bm A}_w^{H_l})_{ij}\right| \ >\  \delta mds_1\right\}}
			\leq 1_{\left\{\left|(\widetilde{\bm A}_w^{G_l})_{ij} - \mu_{ij} \right| > \frac{\delta}{2} mds_1\right\} } \ + \ 1_{ \left\{\left|(\widetilde{\bm A}_w^{H_l})_{ij} - \mu_{ij} \right| >  \frac{\delta}{2} mds_1 \right\}}.
		\end{split}
	\end{align}
	
	Lastly, we use \eqref{K28April23.5} and \eqref{K28April23.6} to estimate the summand in \eqref{K28April23.4} as follows,  
	\begin{align} 
		\begin{split}
			& \left((\widetilde{\bm A}_w^{G_l})_{ij} - (\widetilde{\bm A}_w^{H_l})_{ij}\right)^2 \, 1_{\left\{\left|(\widetilde{\bm A}_w^{G_l})_{ij}- (\widetilde{\bm A}_w^{H_l})_{ij}\right| \ >\  \delta mds_1 \right\}} \\ 
			& \hspace{0.2cm} \leq 2 \left(\left((\widetilde{\bm A}_w^{G_l})_{ij} - \mu_{ij} \right)^2 + \left((\widetilde{\bm A}_w^{H_l})_{ij}- \mu_{ij}\right)^2 \right) 
			\left(1_{\left\{\left|(\widetilde{\bm A}_w^{G_l})_{ij} - \mu_{ij} \right| > \frac{\delta}{2} mds_1 \right\} } \ + \ 1_{\left\{\left|(\widetilde{\bm A}_w^{H_l})_{ij} - \mu_{ij} \right| >  \frac{\delta}{2}  mds_1 \right\}} \right)  \\
			&  \hspace{0.2cm} = 2 \left((\widetilde{\bm A}_w^{G_l})_{ij} -\mu_{ij} \right)^2\,1_{\left\{\left|(\widetilde{\bm A}_w^{G_l})_{ij}- \mu_{ij} \right| > \frac{\delta}{2} mds_1 \right\}}\ + \ 2 \left((\widetilde{\bm A}_w^{G_l})_{ij} -\mu_{ij} \right)^2\, 1_{\left\{\left|(\widetilde{\bm A}_w^{H_l})_{ij} - \mu_{ij} \right| >  \frac{\delta}{2}  mds_1 \right\}} \\ \notag
			& \hspace{0.5cm} \ + \ 2 \left((\widetilde{\bm A}_w^{H_l})_{ij}- \mu_{ij} \right)^2\,1_{\left\{\left|(\widetilde{\bm A}_w^{G_l})_{ij} - \mu_{ij} \right| > \frac{\delta}{2} mds_1 \right\}}\ + \ 2 \left((\widetilde{\bm A}_w^{H_l})_{ij}- \mu_{ij} \right)^2\,1_{\left\{\left|(\widetilde{\bm A}_w^{H_l})_{ij} - \mu_{ij} \right| > \frac{\delta}{2} mds_1 \right\} }.
		\end{split}
	\end{align}
	Hence we get
	\begin{align} \label{K28April23.7}
		\begin{split} 
			E & \left[\left((\widetilde{\bm A}_w^{G_l})_{ij} - (\widetilde{\bm A}_w^{H_l})_{ij}\right)^2 \, 1_{\left\{\left|(\widetilde{\bm A}_w^{G_l})_{ij} - (\widetilde{\bm A}_w^{H_l})_{ij}\right| \ >\  \delta mds_1 \right\}} \right] \\
			& \leq 
			2E \left[\left((\widetilde{\bm A}_w^{G_l})_{ij} -\mu_{ij} \right)^2\, 1_{\left\{\left|(\widetilde{\bm A}_w^{G_l})_{ij} - \mu_{ij} \right| > \frac{\delta}{2} mds_1 \right\} }\right]  \\
			& \hspace{0.5cm} + \ 2E \left[\left((\widetilde{\bm A}_w^{G_l})_{ij} - \mu_{ij} \right)^2\, 1_{\left\{\left|(\widetilde{\bm A}_w^{H_l})_{ij} - \mu_{ij} \right|  >  \frac{\delta}{2} mds_1 \right\} }\right] \\
			& \hspace{0.5cm} + \ 2E \left[\left((\widetilde{\bm A}_w^{H_l})_{ij}- \mu_{ij}\right)^2\,1_{\left\{\left|(\widetilde{\bm A}_w^{G_l})_{ij} - \mu_{ij} \right| > \frac{\delta}{2} mds_1 \right\} }\right] \\
			& \hspace{0.5cm} + \ 2E \left[\left((\widetilde{\bm A}_w^{H_l})_{ij}- \mu_{ij}\right)^2\,1_{ \left\{\left|(\widetilde{\bm A}_w^{H_l})_{ij} - \mu_{ij} \right| >  \frac{\delta}{2} mds_1 \right\} }\right].
		\end{split}
	\end{align}
	Under the null hypothesis, 
	\begin{align*} 
		\begin{split}
			\left\{\left|(\widetilde{\bm A}_w^{G_l})_{ij} - \mu_{ij}\right| \ >\  \frac{\delta}{2} mds_1 \right\} \mbox{ \ and \ }
			\left\{\left|(\widetilde{\bm A}_w^{H_l})_{ij} - \mu_{ij} \right| > \frac{\delta}{2} mds_1 \right\} 
		\end{split}
	\end{align*}
	have the same distribution, and hence we can rewrite \eqref{K28April23.7} as 
	\begin{align*} 
		\begin{split}
			E & \left[\left((\widetilde{\bm A}_w^{G_l})_{ij} - (\widetilde{\bm A}_w^{H_l})_{ij}\right)^2 \, 1_{\left\{\left|(\widetilde{\bm A}_w^{G_l})_{ij} - (\widetilde{\bm A}_w^{H_l})_{ij}\right| \ >\  \delta mds_1 \right\}} \right] \\
			& \leq 4E \left[\left((\widetilde{\bm A}_w^{G_l})_{ij} -\mu_{ij} \right)^2\, 1_{\left\{\left|(\widetilde{\bm A}_w^{G_l})_{ij} - \mu_{ij} \right| > \frac{\delta}{2} mds_1 \right \} }\right] + 4E \bigg[\big((\widetilde{\bm A}_w^{H_l})_{ij}- \mu_{ij}\big)^2\,1_{ \big\{\big|(\widetilde{\bm A}_w^{H_l})_{ij} - \mu_{ij} \big| >  \frac{\delta}{2} mds_1\big\}} \bigg].
		\end{split}
	\end{align*}
	Using this inequality, we continue the estimation  the last term in \eqref{K28April23.4} as follows.
	\begin{align*} 
		\begin{split}
			& \left( \frac{1}{ mds_1^2}\right)\, \sum_{l=1}^m\, \sum_{w=1}^d\, E\left[\left((\widetilde{\bm A}_w^{G_l})_{ij}- (\widetilde{\bm A}_w^{H_l})_{ij}\right)^2\, 1_{\left\{\left|(\widetilde{\bm A}_w^{G_l})_{ij} - (\widetilde{\bm A}_w^{H_l})_{ij}\right| \  > \ \delta mds_1 \right\}} \right]  \\ 
			& \hspace{0.5cm} \leq \left(\frac{4}{ mds_1^2} \right) \, \sum_{l=1}^m\, \sum_{w=1}^d\, E\left[\left((\widetilde{\bm A}_w^{G_l})_{ij} -\mu_{ij} \right)^2\,1_{\left\{\left|(\widetilde{\bm A}_w^{G_l})_{ij} - \mu_{ij} \right| > \frac{\delta}{2} mds_1 \right\}} \right] \\ 
			& \hspace{1cm} \ + \ 
			\left(\frac{4}{ mds_1^2} \right)\, \sum_{l=1}^m\, \sum_{w=1}^d\, E\left[\left((\widetilde{\bm A}_w^{H_l})_{ij}- \mu_{ij}\right)^2\,1_{ \left\{\left|(\widetilde{\bm A}_w^{H_l})_{ij} - \mu_{ij} \right| > \frac{\delta}{2} mds_1 \right\}} \right].
		\end{split}
	\end{align*}
	Under the null hypothesis, if $i, j$ are in $C_u$, then $\mu_{ij} = \hat{p}_{uu}$ and $\sigma_{ij}^2 = \hat{p}_{uu}(1-\hat{p}_{uu})$. Hence, $s_1^2 = 2n(n-1)\hat{p}_{uu}(1-\hat{p}_{uu})$, and the above inequality can be rewritten as 
	\begin{align} \label{K28April23.8}
		\begin{split}
			& \left(\frac{4}{ mds_1^2} \right) \, \sum_{l=1}^m\, \sum_{w=1}^d\,  E\left[\left((\widetilde{\bm A}_w^{G_l})_{ij} - \hat{p}_{uu} \right)^2\, 1_{\left\{\left|(\widetilde{\bm A}_w^{G_l})_{ij} - \hat{p}_{uu} \right| > \frac{\delta}{2} mds_1 \right\} } \right] \\ 
			& \hspace{0.5cm} \ + \ 
			\left( \frac{4}{ mds_1^2}\right)\, \sum_{l=1}^m\, \sum_{w=1}^d\, E\left[\left((\widetilde{\bm A}_w^{H_l})_{ij} - \hat{p}_{uu} \right)^2\, 1_{\left\{\left|(\widetilde{\bm A}_w^{H_l})_{ij} - \hat{p}_{uu} \right| > \frac{\delta}{2} mds_1 \right\}}\right]. 
		\end{split}
	\end{align}
	
	Similarly, if $i$ is in $C_u$ and $j$ is in $C_v$ with $u \neq v$, then $\mu_{ij} = \hat{q}_{uv}$ and $\sigma_{ij}^2 = \hat{q}_{uv}(1-\hat{q}_{uv})$. To finish the whole proof, we note that $(\widetilde{\bm A}_w^{G_l})_{ij}$ and $(\widetilde{\bm A}_w^{H_l})_{ij}$ are i.i.d. and hence they satisfy the Lindeberg condition; that is, summations in \eqref{K28April23.8} go to $0$, as $d$ goes to $\infty$. It follows that the finite sum over $i,j$ in \eqref{K28April23.4} goes to $0$ as well, finishing the proof of our proposition.
\end{proof}

\subsection{Proof of \Cref{prop3.4}} \label{appendix_a2}
\begin{proof}
	Since \Cref{prop3.5} implies that asymptotic normality of $\sigma^{-1} \left(T - \mu \right)$ holds if $\sigma_{\dagger}^2 \rightarrow \infty$ as $d \rightarrow \infty$, all we need for \Cref{prop3.4} is to prove the converse. Let us assume that $\{d_t, t \geq 1\}$ is an increasing sequence of positive integers such that $\lim_{t \rightarrow \infty} \sigma_{d_t}^2 = \sigma_{\dagger}^2 < \infty$. Let 
	\begin{dmath}
		S_d'= \displaystyle \sum_{w=1}^d
		\sum_{l=1}^m 
		\sum_{u, v = 1}^K 
		\sum_{i, j=1}^n    
		\left[ \left((\widetilde{\bm A}^{G_l}_w)_{ij} - (\widetilde{\bm A}^{H_l}_w)_{\pi (\widehat{\bm Q}_{\text{com}})(ij)} \right)^2 1_{\{i, j \in C_u \&
			\tilde{p}_{uu}^l \leq 1/2 \}} \\
		+ \left((\widetilde{\bm A}^{G_l}_w)_{ij} - (\widetilde{\bm A}^{H_l}_w)_{\pi (\widehat{\bm Q}_{\text{com}})(ij)} \right)^2 1_{\{i \in C_u, j \in C_v, u \neq v \&
			\tilde{q}_{uv}^l \leq 1/2 \}} \right] 
	\end{dmath}
	and 
	\begin{dmath}
		S_d'' = \displaystyle \sum_{w=1}^d
		\sum_{l=1}^m 
		\sum_{u, v = 1}^K 
		\sum_{i, j=1}^n    
		\left[ \left((\widetilde{\bm A}^{G_l}_w)_{ij} - (\widetilde{\bm A}^{H_l}_w)_{\pi (\widehat{\bm Q}_{\text{com}})(ij)} \right)^2 1_{\{i, j \in C_u \&
			\tilde{p}_{uu}^l > 1/2 \}} \\
		+ \left((\widetilde{\bm A}^{G_l}_w)_{ij} - (\widetilde{\bm A}^{H_l}_w)_{\pi (\widehat{\bm Q}_{\text{com}})(ij)} \right)^2 1_{\{i \in C_u, j \in C_v, u \neq v \&
			\tilde{q}_{uv}^l > 1/2 \}} \right] 
	\end{dmath}
	Since $E(S_d') + E(S_d'') \leq 2 \left(\text{Var}(S_d') + \text{Var}(S_d'') \right) = \sigma_{d_t}^2 \rightarrow \sigma_{\dagger}^2$ as $t \rightarrow \infty$. It follows that $\max\{\tilde{p}_{uu}^l 1_{\{\tilde{p}_{uu}^l \leq 1/2 \}}, \tilde{q}_{uv}^l 1_{\{\tilde{q}_{uv}^l \leq 1/2 \}}\} \rightarrow 0$ and $\max\{\tilde{p}_{uu}^l 1_{\{\tilde{p}_{uu}^l > 1/2 \}}, \tilde{q}_{uv}^l 1_{\{\tilde{q}_{uv}^l > 1/2 \}}\} \rightarrow 0$ as $d = d_t$ and $t \rightarrow \infty$. By the proof by contradiction, $\sigma_{\dagger}^2 \rightarrow \infty$.
\end{proof}

\subsection{Simulation Results} \label{appendix_a3}
\begin{center}
	\captionof{table}{$P(\text{Type I error})$ for imbalanced block sizes} \label{tab6}
	\centering 	
	\setlength{\tabcolsep}{0.9em}
	\renewcommand{\arraystretch}{1.1}
	
	\smallskip
	\small
	\begin{tabular}{cclccccccc}
		\cline{1-10}
		& \multicolumn{5}{c|}{$\epsilon=0.005$}
		& \multicolumn{4}{c}{$\epsilon=0.01$}
		\\ 
		\hline
		\multicolumn{1}{c|}{$d$} & \multicolumn{2}{c|}{$K=2$}  & \multicolumn{1}{c|}{$K=3$} & \multicolumn{1}{c|}{$K=5$} & \multicolumn{1}{c|}{$K=6$} & \multicolumn{1}{c|}{$K=2$} & \multicolumn{1}{c|}{$K=3$} & \multicolumn{1}{c|}{$K=5$} & \multicolumn{1}{c}{$K=6$} \\ \hline
		\multicolumn{1}{c|}{2}  & \multicolumn{2}{c|}{0.060} & \multicolumn{1}{c|}{0.068}  & \multicolumn{1}{c|}{0.048}  & \multicolumn{1}{c|}{0.059}  & \multicolumn{1}{c|}{0.064}  & \multicolumn{1}{c|}{0.061}  & \multicolumn{1}{c|}{0.060}  &  0.056 \\ 
		\multicolumn{1}{c|}{3}  & \multicolumn{2}{c|}{0.061} & \multicolumn{1}{c|}{0.059}  & \multicolumn{1}{c|}{0.056}  & \multicolumn{1}{c|}{0.060}  & \multicolumn{1}{c|}{0.062}  & \multicolumn{1}{c|}{0.074}  & \multicolumn{1}{c|}{0.058}  &  0.069 \\ 
		\multicolumn{1}{c|}{4}  & \multicolumn{2}{c|}{0.074} & \multicolumn{1}{c|}{0.055}  & \multicolumn{1}{c|}{0.076}  & \multicolumn{1}{c|}{0.066}  & \multicolumn{1}{c|}{0.060}  & \multicolumn{1}{c|}{0.062}  & \multicolumn{1}{c|}{0.065}  &  0.076  \\ 
		\multicolumn{1}{c|}{5}  & \multicolumn{2}{c|}{0.056} & \multicolumn{1}{c|}{0.058}  & \multicolumn{1}{c|}{0.073}  & \multicolumn{1}{c|}{0.075}  & \multicolumn{1}{c|}{0.050}  & \multicolumn{1}{c|}{0.066}  & \multicolumn{1}{c|}{0.069}  &  0.085 \\ 
		\multicolumn{1}{c|}{6}  & \multicolumn{2}{c|}{0.063} & \multicolumn{1}{c|}{0.070}  & \multicolumn{1}{c|}{0.064}  & \multicolumn{1}{c|}{0.067}  & \multicolumn{1}{c|}{0.053}  & \multicolumn{1}{c|}{0.081}  & \multicolumn{1}{c|}{0.062}  &  0.070 \\
		\multicolumn{1}{c|}{7}  & \multicolumn{2}{c|}{0.073} & \multicolumn{1}{c|}{0.065}  & \multicolumn{1}{c|}{0.071}  & \multicolumn{1}{c|}{0.054}  & \multicolumn{1}{c|}{0.080}  & \multicolumn{1}{c|}{0.072}  & \multicolumn{1}{c|}{0.069}  &  0.060 \\ 
		\multicolumn{1}{c|}{8}  & \multicolumn{2}{c|}{0.064} & \multicolumn{1}{c|}{0.062}  & \multicolumn{1}{c|}{0.070}  & \multicolumn{1}{c|}{0.063}  & \multicolumn{1}{c|}{0.063}  & \multicolumn{1}{c|}{0.072}  & \multicolumn{1}{c|}{0.059}  &  0.074 \\ 
		\multicolumn{1}{c|}{9}  & \multicolumn{2}{c|}{0.060} & \multicolumn{1}{c|}{0.054}  & \multicolumn{1}{c|}{0.065}  & \multicolumn{1}{c|}{0.056}  & \multicolumn{1}{c|}{0.054}  & \multicolumn{1}{c|}{0.080}  & \multicolumn{1}{c|}{0.060}  &  0.067 \\ 
		\multicolumn{1}{c|}{10} & \multicolumn{2}{c|}{0.052} & \multicolumn{1}{c|}{0.044}  & \multicolumn{1}{c|}{0.067}  & \multicolumn{1}{c|}{0.073}  & \multicolumn{1}{c|}{0.070}  & \multicolumn{1}{c|}{0.076}  & \multicolumn{1}{c|}{0.063}  &  0.081 \\
		\hline
	\end{tabular}   
\end{center}

\begin{center}
	\captionof{table}{Test power for imbalanced block sizes} \label{tab7}
	\centering 	
	\setlength{\tabcolsep}{0.9em}
	\renewcommand{\arraystretch}{1.1}
	
	\smallskip
	\small
	\begin{tabular}{cclccccccc}
		\cline{1-10}
		& \multicolumn{5}{c|}{$\epsilon=0.005$}
		& \multicolumn{4}{c}{$\epsilon=0.01$}  \\ 
		\hline
		\multicolumn{1}{c|}{$d$}  & \multicolumn{2}{c|}{$K=2$} & \multicolumn{1}{c|}{$K=3$} & \multicolumn{1}{c|}{$K=5$} & \multicolumn{1}{c|}{$K=6$} & \multicolumn{1}{c|}{$K=2$} & \multicolumn{1}{c|}{$K=3$} & \multicolumn{1}{c|}{$K=5$} & \multicolumn{1}{c}{$K=6$} \\ 
		\hline
		\multicolumn{1}{c|}{2}  & \multicolumn{2}{c|}{0.633} & \multicolumn{1}{c|}{0.715}  & \multicolumn{1}{c|}{0.728}  & \multicolumn{1}{c|}{0.696}  & \multicolumn{1}{c|}{0.965}  & \multicolumn{1}{c|}{0.962}  & \multicolumn{1}{c|}{0.978}  &  0.981 \\ 
		\multicolumn{1}{c|}{3}  & \multicolumn{2}{c|}{0.716} & \multicolumn{1}{c|}{0.743}  & \multicolumn{1}{c|}{0.795}  & \multicolumn{1}{c|}{0.771}  & \multicolumn{1}{c|}{0.976}  & \multicolumn{1}{c|}{0.985}  & \multicolumn{1}{c|}{0.993}  &  0.988 \\ 
		\multicolumn{1}{c|}{4}  & \multicolumn{2}{c|}{0.768} & \multicolumn{1}{c|}{0.807}  & \multicolumn{1}{c|}{0.816}  & \multicolumn{1}{c|}{0.802}  & \multicolumn{1}{c|}{0.985}  & \multicolumn{1}{c|}{0.994}  & \multicolumn{1}{c|}{0.998}  &  0.994 \\ 
		\multicolumn{1}{c|}{5}  & \multicolumn{2}{c|}{0.792} & \multicolumn{1}{c|}{0.835}  & \multicolumn{1}{c|}{0.868}  & \multicolumn{1}{c|}{0.833}  & \multicolumn{1}{c|}{0.991}  & \multicolumn{1}{c|}{0.991}  & \multicolumn{1}{c|}{0.992}  &  0.998 \\ 
		\multicolumn{1}{c|}{6}  & \multicolumn{2}{c|}{0.847} & \multicolumn{1}{c|}{0.850}  & \multicolumn{1}{c|}{0.864}  & \multicolumn{1}{c|}{0.863}  & \multicolumn{1}{c|}{0.999}  & \multicolumn{1}{c|}{0.994}  & \multicolumn{1}{c|}{0.997}  &  0.999 \\ 
		\multicolumn{1}{c|}{7}  & \multicolumn{2}{c|}{0.850} & \multicolumn{1}{c|}{0.893}  & \multicolumn{1}{c|}{0.908}  & \multicolumn{1}{c|}{0.876}  & \multicolumn{1}{c|}{0.994}  & \multicolumn{1}{c|}{0.996}  & \multicolumn{1}{c|}{1.000}  &  0.999 \\ 
		\multicolumn{1}{c|}{8}  & \multicolumn{2}{c|}{0.855} & \multicolumn{1}{c|}{0.895}  & \multicolumn{1}{c|}{0.899}  & \multicolumn{1}{c|}{0.890}  & \multicolumn{1}{c|}{0.998}  & \multicolumn{1}{c|}{0.999}  & \multicolumn{1}{c|}{0.998}  &  0.999 \\ 
		\multicolumn{1}{c|}{9}  & \multicolumn{2}{c|}{0.853} & \multicolumn{1}{c|}{0.886}  & \multicolumn{1}{c|}{0.909}  & \multicolumn{1}{c|}{0.898}  & \multicolumn{1}{c|}{0.995}  & \multicolumn{1}{c|}{0.996}  & \multicolumn{1}{c|}{0.999}  &  0.999 \\ 
		\multicolumn{1}{c|}{10} & \multicolumn{2}{c|}{0.897} & \multicolumn{1}{c|}{0.907}  & \multicolumn{1}{c|}{0.926}  & \multicolumn{1}{c|}{0.907}  & \multicolumn{1}{c|}{0.995}  & \multicolumn{1}{c|}{0.999}  & \multicolumn{1}{c|}{1.000}  &  0.998 \\
		\hline
	\end{tabular}   
\end{center}

\subsection{Some Useful Results} \label{appendix_a4}
\subsubsection{Relationship between \texorpdfstring{$\hat{\bm p}$, $\hat{\bm p}^*$, $\hat{\bm q}$, $\hat{\bm q}^*$}{hat p, hat p*, hat q, hat q*} and \texorpdfstring{$\bm p$, $\bm p^*$, $\bm q$, $\bm q^*$}{p, p*, q, q*}} \label{A.4.1}

For $u \textsuperscript{th}$ and $v \textsuperscript{th}$ communities in graph $G$ with $u \neq v$, we have 

\begin{center}
	$\displaystyle \hat{p}_{uu} = \frac{\# \text{ of } 1\text{'s in } \bm A_{uu}}{2{n_u^G \choose 2}}$, and
	$\displaystyle \hat{q}_{uv} = \frac{\# \text{ of } 1\text{'s in } \bm A_{uv}}{{n_u^G \choose 1}{n_v^G \choose 1}}$.
\end{center}
Hence, 
\begin{center}
	$\displaystyle E(\hat{p}_{uu}) = \frac{n_u^G(n_u^G - 1) p_{uu}}{2{n_u^G \choose 2}} = p_{uu}$,
	$\displaystyle E(\hat{q}_{uv}) = \frac{n_u^G n_v^G q_{uv}}{{n_u^G \choose 1}{n_v^G \choose 1}} = q_{uv}$.\\
\end{center}
Similarly, 
\begin{center}
	$\displaystyle E(\hat{p}_{uu}^*) = \frac{n_u^H(n_u^H - 1) p_{uu}^*}{2{n_u^H \choose 2}} = p_{uu}^*$,
	$\displaystyle E(\hat{q}_{uv}^*) = \frac{n_u^H n_v^H q_{uv}^*}{{n_u^H \choose 1}{n_v^H \choose 1}} = q_{uv}^*$. \\
\end{center}
Hence, $\hat{\bm p}$, $\hat{\bm p}^*$, $\hat{\bm q}$, $\hat{\bm q}^*$ are unbiased estimators of $\bm p$, $\bm p^*$, $\bm q$, $\bm q^*$, respectively.

\subsubsection{Relationship between $\sigma^2$ and $n$, $K$} \label{A.4.2}
$C1.$ If $i$ and $j$ are in $C_u$, then $\sigma_{ij}^2 = \hat{p}_{uu}(1 - \hat{p}_{uu})$.
\begin{enumerate}[label=(\roman*)]
	\item Consider the balanced community size, 
	\begin{center}
		$\displaystyle \hat{p}_{uu} = \frac{\sum_{i, j \in C_u} \bm A_{ij}}{\frac{n}{K}(\frac{n}{K} - 1)}$.
	\end{center}
	
	\item Consider the imbalanced community size, 
	\begin{center}
		$\displaystyle \hat{p}_{uu} = \frac{\sum_{i, j \in C_u} \bm A_{ij}}{\frac{n \exp{(w_u)}}{\sum_{u=1}^K \exp{(w_u)}}\left(\frac{n \exp{(w_u)}}{ \sum_{u=1}^K \exp{(w_u)}} - 1 \right)}$.
	\end{center}
\end{enumerate} \leavevmode
\\
$C2.$ If $i$ is in $C_u$ and $j$ is in $C_v$ with $u \neq v$, then $\sigma_{ij}^2 = \hat{q}_{uv}(1 - \hat{q}_{uv})$.
\begin{enumerate}[label=(\roman*)]
	\item Consider the balanced community size, 
	\begin{center}
		$\displaystyle \hat{q}_{uv} = \frac{\sum_{i \in C_u, j \in C_v} \bm A_{ij}}{\frac{n^2}{K^2}}$.
	\end{center}
	
	\item Consider the imbalanced community size, 
	\begin{center}
		$\displaystyle \hat{q}_{uv} = \frac{\sum_{i \in C_u, j \in C_v} \bm A_{ij}}{\frac{n^2 \exp{(w_u)} \exp{(w_v)}}{\left (\sum_{u=1}^K \exp{(w_u)} \right)^2}}$.
	\end{center}
\end{enumerate}

\medskip
%Bibliographystyle{IEEEtran}
%Bibliography{reference}{}
%Bibliographystyle{plainnat}
\bibliographystyle{apacite}	
\bibliography{reference}
	
\end{document}